\documentclass[reprint,superscriptaddress,aps,pre]{revtex4-2}
\usepackage{graphicx,amsmath,amssymb,mathtools,amsthm,bm,physics,algorithm,algpseudocode,xcolor,hyperref}
\bibliographystyle{apsrev4-2}

\DeclareMathOperator*{\argmin}{arg\,min}
\DeclareMathOperator{\diag}{diag}
\newcommand{\conjtp}[1]{{#1^{\dagger}}}
\newcommand{\defeq}{\coloneqq}

\newcommand{\ie}{{\textit{i.e.}}}

\newcommand{\svec}{{\hat{\mathbf{S}}}}
\newcommand{\sx}{{\hat{S}^{x}}}
\newcommand{\sy}{{\hat{S}^{y}}}
\newcommand{\sz}{{\hat{S}^{z}}}
\newcommand{\splus}{{\hat{S}^{+}}}
\newcommand{\sminus}{{\hat{S}^{-}}}
\newcommand{\ham}{{\hat{H}}}
\newcommand{\twist}{{\hat{U}_{\text{twist}}}}

\newcommand{\zmeansq}{{\overline{|z|^2}}}

\theoremstyle{definition}
\newtheorem{theorem}{Theorem}
\newtheorem{lemma}[theorem]{Lemma}

\begin{document}

\title{Nested Iterative Shift-invert Diagonalization for Many-body Localization in the Random-field Heisenberg Chain}

\author{Taito Kutsuzawa}
\affiliation{Department of Physics, The University of Tokyo, Tokyo 113-0033, Japan}
\author{Synge Todo}
\affiliation{Department of Physics, The University of Tokyo, Tokyo 113-0033, Japan}
\affiliation{Institute for Physics of Intelligence, The University of Tokyo, Tokyo 113-0033, Japan}
\affiliation{Institute for Solid State Physics, The University of Tokyo, Kashiwa 277-8581, Japan}

\date{\today}

\begin{abstract}
We study the many-body localization of the random-field Heisenberg chain using the nested shift-invert Lanczos method with an iterative linear solver.
We use the minimum residual method (MINRES) inside each Lanczos iteration.
The memory consumption of the proposed method is only proportional to the dimension of the Hilbert space.
We also introduce a preconditioner that takes into account the effects of disorder and interaction in the random-field Heisenberg chain.
As a probe of many-body localization transition, we propose a unitary operator called the twist operator, which has a clear interpretation in the real space.
We discuss its behavior for thermal and localized eigenstates.
We demonstrate the efficiency of the nested iterative shift-invert diagonalization method with the proposed preconditioner for the many-body localization problem and estimate the transition point of the random-field Heisenberg chain more precisely based on the finite-size analysis of the expectation value of the twist operator.
\end{abstract}

\maketitle

\section{Introduction}
Understanding the mechanism of thermalization is one of the central problems in condensed matter physics~\cite{Alessio2016}.
A natural question is whether a closed quantum system thermalizes or not by its own unitary dynamics.
The eigenstate thermalization hypothesis (ETH)~\cite{Deutsch1991,Srednicki1994,Srednicki1999} is a sufficient condition that explains thermalization from the microscopic level.
However, it is known that there are several counterexamples, such as integrable systems~\cite{Rigol2008,Rigol2009,Santos2010,Biroli2010,Steinigeweg2013} and quantum many-body scar~\cite{Serbyn2021,Moudgalya2021,Papic2021}, where the system does not thermalize by itself.
Many-body localization~\cite{Nandkishore2015,Abanin2017,Alet2018,Abanin2019} has been investigated extensively as another possible mechanism that breaks ETH.
A many-body localized system fails to thermalize under strong disorder but can also delocalize and recover thermalization due to interaction.
The phase transition between thermal and localized phases is called many-body localization transition~\cite{Pal2010,Luitz2015}.

The Hamiltonian of the spin-1/2 random-field Heisenberg chain is given by
\begin{equation} \label{eq:heisenberg_hamiltonian}
    \ham = \sum_{j = 1}^{L} (\svec_{j} \cdot \svec_{j + 1} - h_j \sz_{j}),
\end{equation}
where $\svec_{j} = (\sx_{j}, \sy_{j}, \sz_{j})$ is a spin-1/2 operator at site $j = 1, 2, \dots, L$.
We impose the periodic boundary conditions and identify $\svec_{L + 1}$ with $\svec_{1}$.
The Hamiltonian \eqref{eq:heisenberg_hamiltonian} commutes with $\sz \coloneqq \sum_{j=1}^L \sz_{j}$.
We consider a chain of even length $L$ and eigenstates in the $\sz = 0$ sector.
The magnetic fields $h_1, h_2, \dots, h_L$ are independent and identically distributed random variables with the uniform distribution over the interval $[-h, h]$, where $h \ge 0$ represents the disorder strength.

Let us consider highly excited states in the middle of the energy spectrum of the model \eqref{eq:heisenberg_hamiltonian}.
It has been argued that they behave qualitatively differently in the weak disorder regime $h \ll J$ and the strong disorder regime $h \gg J$~\cite{Pal2010,Luitz2015}.
Under the weak disorder, the energy eigenstates are extended and have volume-law entanglement (thermal phase).
On the other hand, under the strong disorder, the energy eigenstates localize and obey the area law of entanglement (many-body localized phase), where local integrals of motion emerge~\cite{Serbyn2013a} but the interaction causes dephasing dynamics that yields the logarithmic growth of entanglement~\cite{Znidaric2008,Bardarson2012,Serbyn2013b,Huse2014}.

The many-body localization has been investigated by using various numerical methods.
As long as the disorder is strong enough $h \gg J$, the tensor network method~\cite{Khemani2016,Xiongjie2017} can simulate many-body localized states efficiently as the entanglement entropy is small.
However, as this method becomes rapidly ineffective near the transition between volume-law and area-law entangled eigenstates,
the exact diagonalization has been used mainly so far~\cite{Pal2010,Luitz2015,Sierant2020,Beeumen2020}. 
Unfortunately, the system size that the exact diagonalization can handle is limited, and the numerical results suffer from a severe finite-size effect. Thus, the nature of the many-body localization transition in the thermodynamic limit is still under active study~\cite{Roeck2017,Luitz2017,Vosk2015,Potter2015,Dumitrescu2017,Thiery2018,Zhang2016,Goremykina2019,Morningstar2019,Morningstar2020,Dumitrescu2019}.
To assess the existence of the many-body localized phase in the infinite system, we must study larger systems with the help of more sophisticated numerical methods.

In this paper, we introduce the shift-invert Lanczos method~\cite{Ericsson1980,TemplatesEigen} with an iterative linear solver, as with the Anderson localization problem~\cite{Schenk2008}.
The most computationally demanding part of the algorithm is to solve a linear equation.
The coefficient matrix is the energy-shifted Hamiltonian, which is Hermitian indefinite.
For this purpose, we use the minimum residual method (MINRES)~\cite{Paige1975}, which is one of the Krylov subspace methods~\cite{MatrixComputations}.
The memory usage becomes lower than direct methods~\cite{Luitz2015,Pietracaprina2018} by a factor of the exponential of system size.

The computation time becomes longer than the direct method in exchange for lower memory usage.
For Krylov subspace methods to converge successfully, preconditioning is essential.
We propose a simple preconditioner with both disorder and interaction taken into account, which we find effective for the present random-field Heisenberg chain.

By using the proposed nested iterative method, we successfully reproduce the results of the bipartite entanglement entropy and the level spacing ratio~\cite{Oganesyan2007,Pal2010,Luitz2015} across the many-body localization transition of the random-field Heisenberg chain~\eqref{eq:heisenberg_hamiltonian}.
In addition, we propose to use the twist operator~\cite{Lieb1961,Nakamura2002} as a probe of many-body localization transition.
The twist operator has a clear meaning in the real space that it rotates spins over the chain with gradually increasing angles.
We discuss the finite-size effect of the expectation value of the twist operator for thermal and localized eigenstates.
We calculate the expectation value of the twist operator for each energy eigenstate to observe how thermal and localized eigenstates respond to the global perturbation induced by the twist operator.
Furthermore, we estimate the many-body localization transition point of the random-field Heisenberg chain by the finite-size analysis of the twist operator.

The rest of this paper is organized as follows: In Sec.~\ref{sec:nested_iterative_shift_invert_diagonalization}, we introduce the nested iterative shift-invert Lanczos method with MINRES.
We also propose a preconditioner that is effective to the random-field Heisenberg chain.
In Sec.~\ref{sec:twist_operator}, the twist operator is introduced as a probe of many-body localization transition.
We discuss the asymptotic behavior of its expectation value in thermal and localized phases.
In Sec.~\ref{sec:results}, we show the benchmark results of the present algorithm and give a finite-size analysis of the many-body localization transition by using the twist operator.
Finally, we give a conclusion in Sec.\ref{sec:conclusion}.

\section{Nested Iterative Shift-invert Diagonalization}
\label{sec:nested_iterative_shift_invert_diagonalization}
Let $A \in \mathbb{C}^{n \times n}$ be a Hermitian sparse matrix.
Our goal is to calculate a few eigenvectors of $A$ with eigenvalues closest to an arbitrary target value $\sigma$, which is typically in the middle of the spectrum of $A$.

We use the shift-invert Lanczos method~\cite{Ericsson1980,TemplatesEigen}.
In other words, we apply the Lanczos method~\cite{Kawamura2017} on the matrix $(A - \sigma I)^{-1}$, where $I \in \mathbb{C}^{n \times n}$ is the identity matrix.
The most time-consuming part of the algorithm is the matrix inversion:
\begin{equation}
    \bm{v} \mapsto (A - \sigma I)^{-1} \bm{v}
\end{equation}
for a given vector $\bm{v} \in \mathbb{C}^n$.
To execute the matrix inversion, a previous study~\cite{Luitz2015} used direct solvers, which compute the LU decomposition of the coefficient matrix $A - \sigma I$.

Instead, we make use of the minimum residual method (MINRES)~\cite{Paige1975}, which is a Krylov subspace method that solves a linear equation $A \bm{x} = \bm{b}$ with a Hermitian coefficient matrix $A \in \mathbb{C}^{n \times n}$ by minimizing the norm of the residual vector $\bm{b} - A \bm{x}$ over the Krylov subspace (see appendix~\ref{appendix:minres}).
Algorithm~\ref{alg:SIMINRES} shows the pseudocode of the nested iterative shift-invert diagonalization, where $\bm{v} \in \mathbb{C}^n$ is an initial vector and $m > 0$ the maximum number of Lanczos iterations.
The superscript $\dagger$ means Hermitian conjugate.
The iterative algorithm MINRES is nested inside each iteration of the Lanczos method.
\begin{figure}[tb]
  \begin{algorithm}[H]
      \caption{\label{alg:SIMINRES}Nested iterative shift-invert diagonalization}
      \begin{algorithmic}[1]
          \Function{SIMINRES}{$A,\ \sigma,\ \bm{v},\ m$}
            \State $\bm{v}_0 \gets \bm{0}$
            \State $\beta_1 \gets \norm{\bm{v}}$
            \State $\bm{v}_1 \gets \bm{v} / \beta_1$
            \State $k \gets 1$
            \Loop
              \State /* Solve $(A - \sigma I) \bm{x} = \bm{v}_k$ by MINRES */
              \State $\bm{x} \gets (A - \sigma I) ^{-1} \bm{v} _{k}$
              \State
              \State /* Orthogonalization */
              \State $\bm{w} \gets \bm{x} - \beta_k \bm{v}_{k - 1}$
              \State $\alpha_k \gets \conjtp{\bm{v}_k} \bm{w}$
              \State $\bm{w} \gets \bm{w} - \alpha_k \bm{v}_k$
              \State
              \State /* Normalization */
              \State $\beta_{k + 1} \gets \sqrt{\conjtp{\bm{w}} \bm{w}}$
              \If{$\beta_{k + 1} = 0$ or $k = m$}
                \State break
              \EndIf
              \State $\bm{v}_{k + 1} \gets \bm{w} / \beta_{k + 1}$
              \State
              \State $k \gets k + 1$
            \EndLoop
            \State Let $T_k$ be the $k \times k$ real symmetric tridiagonal matrix with diagonals $\alpha_1, \dots, \alpha_k$ and subdiagonals $\beta_2, \dots, \beta_k$.
            \State Diagonalize $T_k$ ($T_k \bm{y}_i^{(k)} = \theta_i^{(k)} \bm{y}_i^{(k)}\ (i = 1, 2, \dots, k)$).
            \State \Return $\sigma + \frac{1}{\theta_i^{(k)}}, V_k \bm{y}_i^{(k)}\ (i = 1, 2, \dots, k)$
            \EndFunction
      \end{algorithmic}
  \end{algorithm}
\end{figure}

We substitute the iterative solver for direct solvers because the former has a great advantage in memory usage against the latter.
In direct algorithms, as the coefficient matrix is factorized into lower and upper triangular matrices, some zero entries in the original matrix become nonzero.
These new nonzero elements are called \emph{fill-in}.
Let us define the fill-in ratio as the number of nonzero entries of the factorized matrices divided by that of the original matrix.
Direct solvers attempt to keep the fill-in ratio as small as possible.
However, it is reported~\cite{Pietracaprina2018} that the fill-in ratio grows exponentially with the system size for the random-field Heisenberg model's Hamiltonian \eqref{eq:heisenberg_hamiltonian} represented in the $\sz$ basis, where all $\sz_1, \sz_2, \dots, \sz_L$ are diagonal.
In contrast to direct solvers, iterative algorithms do not modify the coefficient matrix $A$ but use it only as matrix-vector products $\bm{v} \mapsto A \bm{v}$.
We do not even need to retain the Hamiltonian on memory if the matrix elements can be generated on the fly during the matrix-vector product.
Moreover, Krylov subspace methods for Hermitian coefficient matrices, including MINRES, usually retain only a fixed number of vectors as a result of the Lanczos tridiagonalization.
Thus, the memory usage can be kept down to $\mathcal{O}(n)$.

On the other hand, there is a possibility that iterative algorithms do not give solutions in a reasonable time or even do not converge at all.
The convergence of the Krylov subspace methods is governed by the condition numbers of coefficient matrices.
If the coefficient matrix is ill-conditioned, that is, its condition number is large, it takes many iterations for iterative algorithms to converge.
In fact, our problem becomes ill-conditioned exponentially with the system size because the target energy $\sigma$ is in the middle of the energy spectrum where the density of states grows exponentially.
However, the convergence of iterative algorithms is improved significantly if we can find an effective preconditioner.
In order to make the algorithm practical, it is essential to develop a preconditioner, tailored to our coefficient matrix, which transforms the matrix into one as close to the identity matrix as possible.

Let us consider solving a preconditioned system
\begin{equation}
  (C^{-1} (A - \sigma I) \conjtp{(C^{-1})})(\conjtp{C} \bm{x}) = C^{-1} \bm{v}
\end{equation}
by MINRES.
The preconditioning matrix is defined to be $M \coloneqq C \conjtp{C}$, which must be easily invertible and positive definite.
We consider diagonal matrices with positive diagonal elements to satisfy these requirements.
Let $A$ be the Hamiltonian \eqref{eq:heisenberg_hamiltonian} represented in the $\sz$ basis.
If the matrix $A - \sigma I = (a_{ij} - \sigma \delta_{ij})_{1 \le i, j \le n}$ is diagonally dominant, then scaling the matrix by its diagonal elements should be effective.
This idea gives rise to the Jacobi preconditioner~\cite{MatrixComputations}, namely
\begin{equation}
  M_{\text{Jacobi}} \defeq \diag (|a_{11} - \sigma|, |a_{22} - \sigma|, \dots, |a_{nn} - \sigma|),
\end{equation}
where we assume $a_{ii} \neq \sigma\ (i = 1, 2, \dots, n)$ and take the absolute values to ensure the positive definiteness.

The Hamiltonian \eqref{eq:heisenberg_hamiltonian} does become diagonally dominant when the disorder is strong.
However, the off-diagonal elements play an essential role in the delocalization transition to the thermal phase as the disorder is weakened.
For this reason, we attempt to incorporate off-diagonal elements into the Jacobi preconditioner by considering the following preconditioner:
\begin{equation} \label{eq:rownorm_preconditioner}
  M_{\text{norm}} \defeq \diag (r_1, r_2, \dots, r_n),
\end{equation}
where $r_i\ (i = 1, 2, \dots, n)$ is the 2-norm of the $i$-th row of the matrix $A - \sigma I$:
\begin{equation}
  r_i \defeq \sqrt{|a_{ii} - \sigma|^2 + \sum_{k \neq i} |a_{ik}|^2}.
\end{equation}
We demonstrate the effectiveness of the proposed preconditioner in Sec.~\ref{sec:results}.

\section{Twist operator}
\label{sec:twist_operator}

The twist operator
\begin{equation}
    \twist \coloneqq \exp \left[ i \frac{2 \pi}{L} \sum_{j = 1}^L j \sz_{j} \right]
\end{equation}
is a unitary operator that generates a spin-wave-like excitation by rotating spins around the $z$ axis with angles $\theta_j \coloneqq \frac{2\pi}{L} j$ gradually increasing over sites $j = 1, 2, \dots, L$.
The twist operator appeared in the proof of the Lieb-Schultz-Mattis theorem~\cite{Lieb1961} to create a trial state orthogonal to the ground state and has excitation energy of $\mathcal{O}(L^{-1})$.
Nakamura and Todo utilized the twist operator as an order parameter to detect quantum phase transitions~\cite{Nakamura2002}.

We propose the following quantity for detecting the many-body localization transition between thermal and localized eigenstates:
\begin{equation}
    z \defeq \expval{\twist}{\psi},
\end{equation}
where $\ket{\psi}$ is an eigenstate of the Hamiltonian.
This quantity, \emph{twist overlap} $z$, measures how much the twisted state $\twist \ket{\psi}$ overlaps with the original eigenstate $\ket{\psi}$.
For thermal eigenstates, we expect that the twist operator generates a spin-wave-like excitation and creates a new state which is orthogonal to the original one, as in the case without randomness~\cite{Lieb1961}.
On the other hand, we expect that localized eigenstates are not affected by the long-wave-length perturbation by the twist operator but are responsive to modifications of the local degrees of freedom.
In the rest of this section, we discuss the excitation energy induced by the twist operator and the finite-size behavior of the twist overlap for thermal and localized eigenstates.

First, let us confirm that the energy difference between the twisted state $\twist \ket{\psi}$ and the original eigenstate $\ket{\psi}$ vanishes in the thermodynamic limit for the Hamiltonian~\eqref{eq:heisenberg_hamiltonian}.
We can calculate the energy difference as in the case without randomness~\cite{Lieb1961,Affleck1986} since the random magnetic-field terms in the Hamiltonian~\eqref{eq:heisenberg_hamiltonian} are invariant under the transformation by the twist operator.
By using the following identities,
\begin{align}
  & \conjtp{\twist} \sz_{j} \twist = \sz_{j}, \\
  & \conjtp{\twist} \splus_{j} \twist = e^{-i \theta_j} \splus_{j}, \\
  & \conjtp{\twist} \sminus_{j} \twist = e^{i \theta_j} \sminus_{j},
\end{align}
where $\theta_j \defeq \frac{2\pi}{L} j$, we can write
\begin{multline}
  \conjtp{\twist} \ham \twist - \ham \\
  = \frac{J}{2} \left( \cos \left( \frac{2 \pi}{L} \right) - 1 \right) \sum_{j = 1}^{L} (\splus_{j} \sminus_{j + 1} + \sminus_{j} \splus_{j + 1}) \\
  + i \frac{J}{2} \sin \left( \frac{2 \pi}{L} \right) \sum_{j = 1}^{L} (\splus_{j} \sminus_{j + 1} - \sminus_{j} \splus_{j + 1}).
\end{multline}
Let $\ket{\psi}$ be an eigenstate of the Hamiltonian \eqref{eq:heisenberg_hamiltonian}.
The Hamiltonian is real symmetric in the $\sz$ basis.
Hence, the expansion coefficients of the eigenstate $\ket{\psi}$ can be chosen to be real and so are those of $\splus_{j} \sminus_{j + 1} \ket{\psi}$ and $\sminus_{j} \splus_{j + 1} \ket{\psi}$.
As a result, we obtain the energy difference as follows:
\begin{multline}
  \expval{(\twist^{\dagger} \ham \twist - \ham)}{\psi} = \frac{J}{2} \left( \cos \left( \frac{2 \pi}{L} \right) - 1 \right) \\
  \times \sum_{j = 1}^{L} \expval{(\splus_{j} \sminus_{j + 1} + \sminus_{j} \splus_{j + 1})}{\psi}.
\end{multline}
An upper bound can be given as follows assuming the eigenstate $\ket{\psi}$ is normalized:
\begin{align}
  |\expval{(\twist^{\dagger} \ham \twist - \ham)}{\psi}|
  \le \frac{|J|}{4} \left( \frac{2 \pi}{L} \right)^2 L
  = \frac{|J| \pi^2}{L},
\end{align}
which vanishes in the limit $L \to \infty$.

Next, we discuss the behavior of the twist overlap $z$ in the thermal phase.
Let us expand a thermal eigenstate $\ket{\psi}$ as
\begin{equation}
  \ket{\psi} = \sum_{\alpha = 1}^{n} c_{\alpha} \ket{\alpha} \quad (c_{\alpha} \in \mathbb{R}),
\end{equation}
where $\{ \ket{\alpha} \}$ is the set of $\sz$ basis vectors in the $\sz = 0$ sector of dimension $n = \binom{L}{L/2}$.
As an approximation of extended eigenstates, we assume that $c_{\alpha}$ ($\alpha = 1, 2, \dots, n$) are drawn independently from the normal distribution with zero mean and variance $1/n$.
The variance comes from the normalization of $\ket{\psi}$.
Then, the average of the squared norm of the expectation value of the twist operator becomes (see appendix~\ref{appendix:twist_thermal_gaussian})
\begin{equation}
    \overline{|\expval{\twist}{\psi}|^2}
    = \frac{2}{n}, \label{eq:twist_thermal_gaussian}
\end{equation}
which decreases exponentially as the system size $L$ grows.

Lastly, we discuss the behavior of the twist overlap $z$ in the localized phase by perturbation theory.
We assume $h \gg J$ and divide the Hamiltonian \eqref{eq:heisenberg_hamiltonian} into an unperturbed part $\ham_{0}$ and a perturbation part $\hat{V}$ as follows:
\begin{align}
  \ham     &= \ham_{0} + \hat{V}, \\
  \ham_{0} &= - \sum_{j = 1}^{L} h_j \sz_{j}, \\
  \hat{V}  &= J \sum_{j = 1}^{L} \svec_{j} \cdot \svec_{j + 1}.
\end{align}
The unperturbed part $\ham_{0}$ is diagonal in the $\sz$ basis $\{ \ket{\alpha} \}$:
\begin{equation}
  \ham_{0} \ket{\alpha} = E_{\alpha}^{(0)} \ket{\alpha}.
\end{equation}
Let us consider the following eigenstate $\ket{\psi}$ of the perturbed Hamiltonian $\ham$:
\begin{equation}
  \ket{\psi} = \ket{\alpha} + \sum_{\beta \neq \alpha} c_{\beta}^{(1)} \ket{\beta} + \mathcal{O}(J^2),
\end{equation}
which is expressed as a linear combination of the unperturbed eigenstates $\{ \ket{\alpha} \}$.
We fix the coefficient of $\ket{\alpha}$ to be unity.
$c_{\beta}^{(1)}$ is the $\mathcal{O}(J)$ coefficient of $\ket{\beta}$ in $\ket{\psi}$:
\begin{equation}
  c_{\beta}^{(1)} = \frac{\mel{\beta}{\hat{V}}{\alpha}}{E_{\alpha}^{(0)} - E_{\beta}^{(0)}}.
\end{equation}
Then, we can evaluate the expectation value of the twist operator as follows (see appendix~\ref{appendix:twist_perturbation_bound}):
\begin{multline} \label{eq:twist_perturbation_bound}
  \left| \frac{\expval{\twist}{\psi}}{1 + \braket{\psi}^{(2)}} - \expval{\twist}{\alpha} \right| \\
  \le 2 \sin \left( \frac{\pi}{L} \right) \frac{\braket{\psi}^{(2)}}{1 + \braket{\psi}^{(2)}} + \mathcal{O}(J^3),
\end{multline}
where $\braket{\psi}^{(2)}$ is the $\mathcal{O}(J^2)$ term in the normalization constant $\braket{\psi}$:
\begin{equation}
  \braket{\psi}^{(2)} = \sum_{\beta \neq \alpha} \left| c_{\beta}^{(1)} \right|^{2}.
\end{equation}
The $\mathcal{O}(J^2)$ term in the right hand side of Eq.~\eqref{eq:twist_perturbation_bound} vanishes as $\mathcal{O}(L^{-1})$ in the limit $L \to \infty$.

\section{Results}
\label{sec:results}
\subsection{Benchmark}
We implement the outer Lanczos loop on a central processing unit (CPU).
The most computationally intensive part within the Lanczos loop is the matrix inversion using MINRES.
The computation of MINRES consists of matrix-vector products, vector inner products, vector additions, and scalar-vector multiplications, which are memory-bound and can be performed embarrassingly parallel.
To accelerate the matrix inversion, we offload the execution of the MINRES algorithm to a graphical processing unit (GPU).
A GPU is a parallel computer equipped with a massive number of computation cores and high memory bandwidth.
We use the NVIDIA GPU (V100 and A100) and its programming model CUDA~\cite{CUDAProgrammingGuide}.
Figure~\ref{fig:siminres_impl} shows the concept of our implementation.
\begin{figure}[tb]
  \includegraphics[width=\hsize]{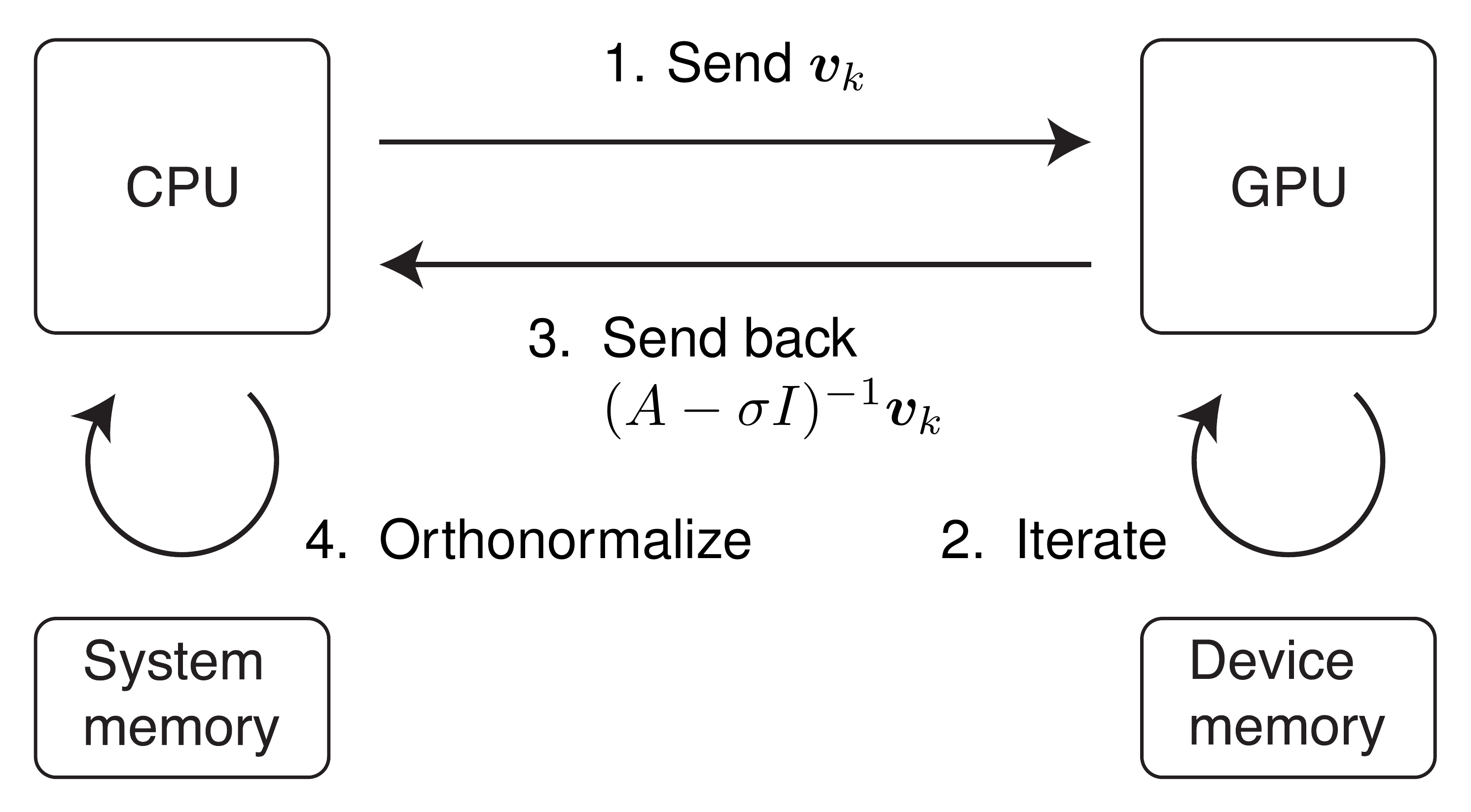}
  \caption{
    \label{fig:siminres_impl}
    Concept of our implementation of Alg.~\ref{alg:SIMINRES}.
    CPU and GPU have separate random access memories, which are called the system memory and device memory, respectively.
    We store the Lanczos basis vectors on the system memory.
    The vectors used by MINRES are stored on the device memory.
    First, we send a Lanczos basis vector $\bm{v}_k$ from the system memory to the device memory.
    Second, the GPU executes the iterations of MINRES to solve $(A - \sigma I) \bm{x} = \bm{v}_k$.
    Third, we send back the solution vector $\bm{x} = (A - \sigma I)^{-1} \bm{v}_k$ from the device memory to the system memory.
    Lastly, we obtain a new Lanczos basis vector $\bm{v}_{k + 1}$ by orthonormalizing $(A - \sigma I)^{-1} \bm{v}_k$ against previous Lanczos basis vectors on the CPU.
  }
\end{figure}

For the Heisenberg model's Hamiltonian \eqref{eq:heisenberg_hamiltonian}, we measure how much memory and time our program spends to obtain 50 eigenstates with eigenvalues closest to the target energy $\sigma = (E_{\text{min}} + E_{\text{max}}) / 2$, where $E_{\text{min}}$ and $E_{\text{max}}$ are the smallest and largest eigenvalues, respectively.
First, we measure the CPU and GPU memory for a single run of the program without any preconditioner.
We fix the disorder strength to be $h = 1$.
Figure~\ref{fig:siminres_memory} shows the result.
The memory usage is expected to be modeled as
\begin{align}
  f_{\rm CPU}(n) &\approx 8 \,\text{bytes} \times m \times n + C_{\rm CPU}, \label{eq:fcpu}\\
  f_{\rm GPU}(n) &\approx 8 \,\text{bytes} \times 9 \times n + C_{\rm GPU} \label{eq:fgpu}
\end{align}
for CPU and GPU, respectively. Here, $n$ is the dimension of the Hilbert space, \ie, the size of the matrix, and $m$ is the number of the iterations of the Lanczos algorithm (see Alg.~\ref{alg:SIMINRES}).
Factors 8 and 9 in Eqs.~\eqref{eq:fcpu} and \eqref{eq:fgpu} denote the size of double-precision floating number and the number of vectors required for MINRES, respectively.
We observe that the number of Lanczos iterations $m$ is almost constant regardless of the system size $L$ and the disorder strength $h$.
Thus, we fix it as a constant, $m=130$, for modeling the CPU memory usage by Eq.~\eqref{eq:fcpu}.
The constant terms $C_{\rm CPU}$ and $C_{\rm GPU}$ are determined from the memory usage measured for $L=12$ ($n=924$).
As shown in Fig.~\ref{fig:siminres_memory}, our linear memory-usage models describe the measured values very well, and we expect that the models are valid for larger $n$.
\begin{figure}[tb]
  \includegraphics[width=\hsize]{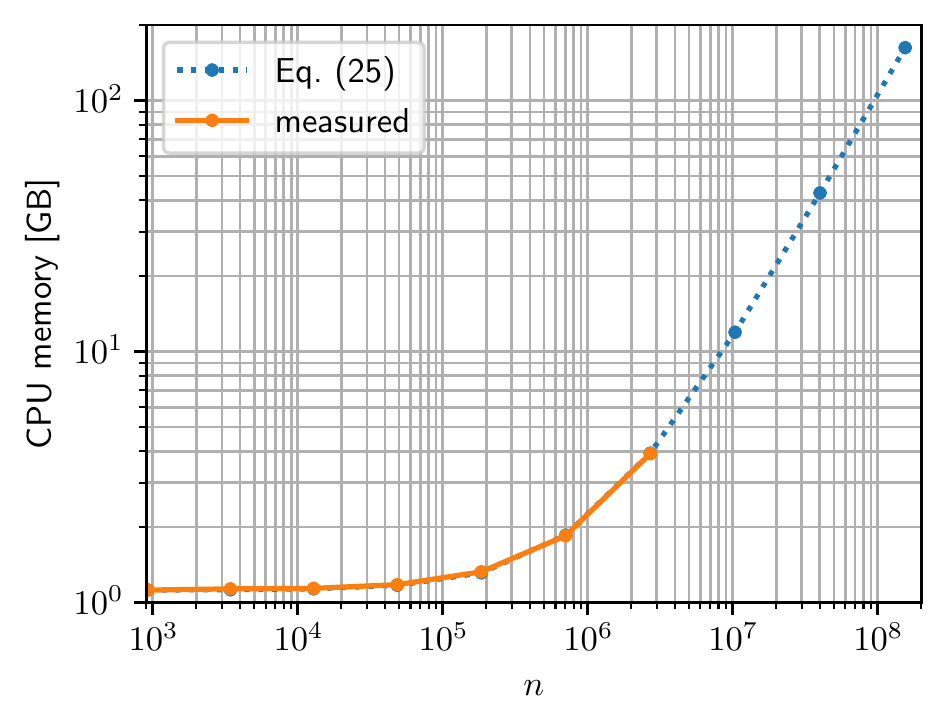}
  \includegraphics[width=\hsize]{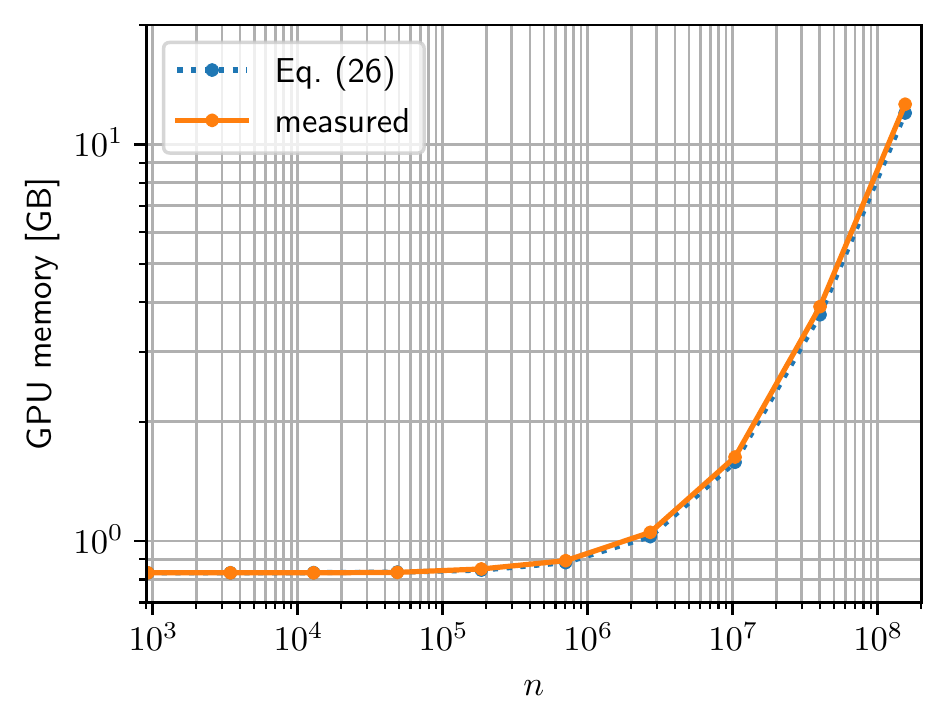}
  \caption{
    \label{fig:siminres_memory}
    Memory usage of CPU and GPU in Alg.~\ref{alg:SIMINRES} at $h=1$.
    The horizontal axis represents the dimension of the Hilbert space, \ie, the size of the matrix.
    The points in the figure correspond to $L=12, 14, 16, 18, 20, 22, 24, 26, 28, 30$, respectively.
    The measured memory usage is plotted by the orange solid lines.
    The blue dashed lines denote the linear memory-usage model described by Eqs.~\eqref{eq:fcpu} and \eqref{eq:fgpu}.
  }
\end{figure}

Our implementation of Alg.~\ref{alg:SIMINRES} uses several orders of magnitude less memory than the shift-invert Lanczos method with direct linear solvers.
In the latter, it was reported to be 244 GB for $L=22$~\cite{Pietracaprina2018}, while in the former is only 1.85 GB (CPU) plus 0.89 GB (GPU).
Even for system size $L=30$, we estimate that the memory usage of Alg.~\ref{alg:SIMINRES} is 162 GB (CPU) plus 12.6 GB (GPU), which fits well in modern workstations.
This significant reduction of memory usage is the essential advantage of the present algorithm.

Next, we measure the elapsed time (wall time) spent in a single program run.
We measure the execution time for the cases without preconditioner and with the row-norm preconditioner~\eqref{eq:rownorm_preconditioner}.
Figure~\ref{fig:siminres_time} shows the results for $h=3$ and 4.
The time complexity of the algorithm is expected to be $\mathcal{O}(n^2 L m)$.
As $m$ is almost constant regardless of $L$, the execution time is of order $n^2$ for large matrix sizes $n = 184\,756, 705\,432, 2\,704\,156$ ($L=20$, 22, 24, respectively).
The factor $L$ in the expected time complexity $\mathcal{O}(n^2 L m)$ can also be treated as a constant since $n$ increases exponentially with respect to $L$.
We also confirm that the row-norm preconditioner \eqref{eq:rownorm_preconditioner} significantly reduces the execution time even for $h = 3$ and 4, which is in the thermal phase as we see below.
\begin{figure}[tb]
  \includegraphics[width=\hsize]{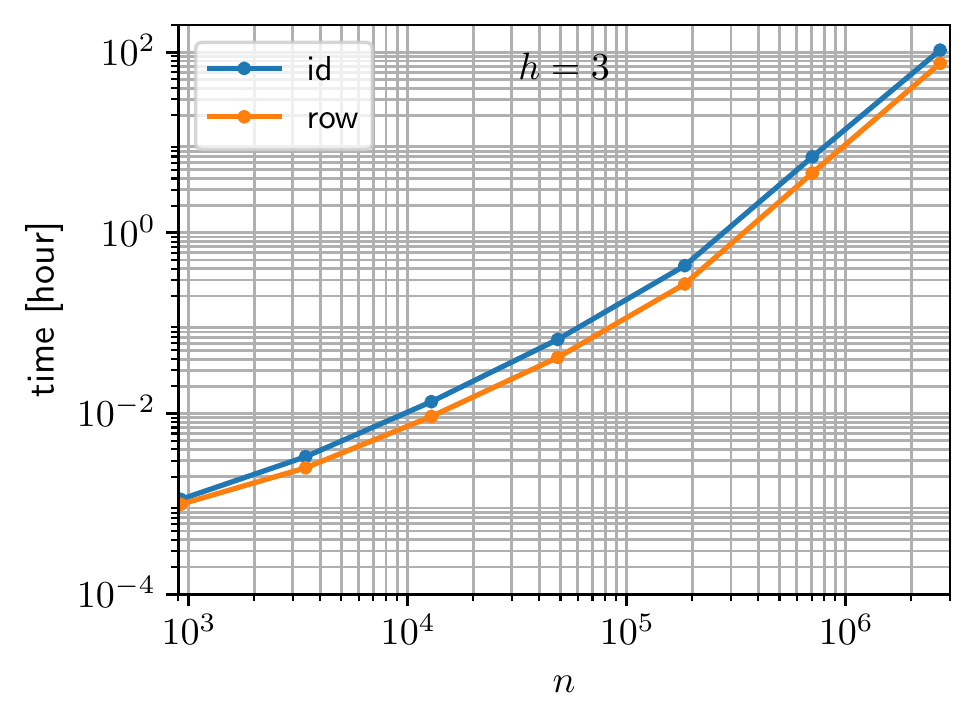}
  \includegraphics[width=\hsize]{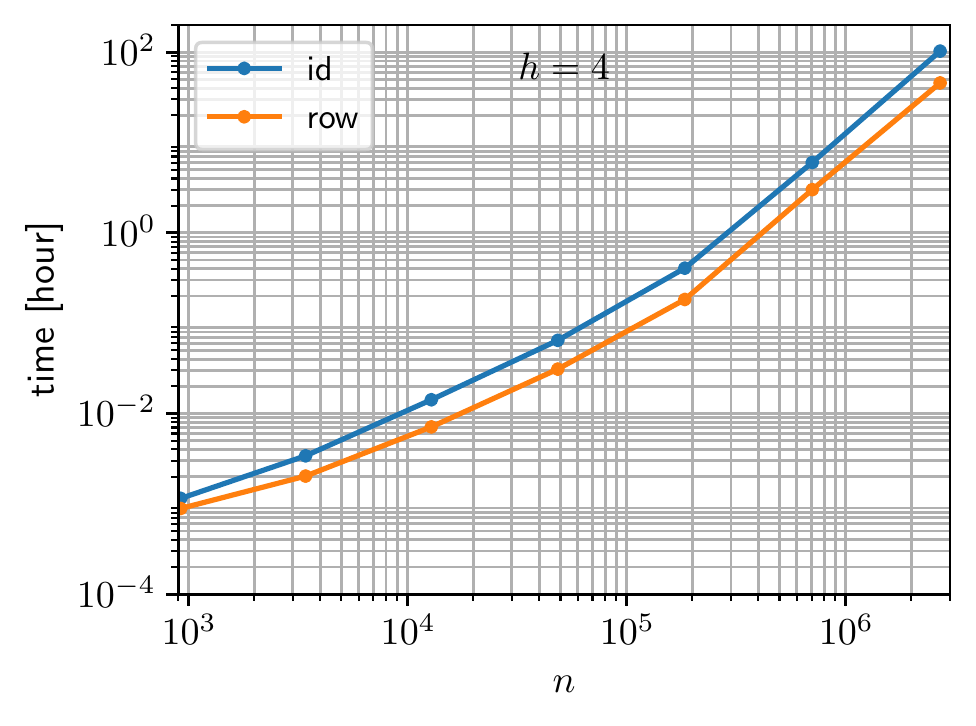}
  \caption{
    \label{fig:siminres_time}
    Elapsed time in a single run of Alg.~\ref{alg:SIMINRES} at $h=3$ and 4.
    The horizontal axis represents the dimension of the Hilbert space, \ie, the size of the matrix.
    The points in the figure correspond to $L=12, 14, 16, 18, 20, 22, 24$, respectively.
    The execution time is measured with two types of preconditioners: no preconditioner (id) and the row-norm preconditioner \eqref{eq:rownorm_preconditioner} (row).
  }
\end{figure}

\subsection{Entanglement and level spacing}
By using Alg.~\ref{alg:SIMINRES}, we calculate the bipartite entanglement entropy
\begin{equation}
  S = - \Tr \rho_A \log \rho_A
\end{equation}
and the level spacing ratio
\begin{equation}
  r_k = \frac{\min \{ \delta_k, \delta_{k - 1} \}}{\max \{ \delta_k, \delta_{k - 1} \}}
\end{equation}
for the random-field Heisenberg chain \eqref{eq:heisenberg_hamiltonian}.
Here, $\rho_A$ is the reduced density matrix for the subsystem of sites $j = 1, 2, \dots, L/2$.
The level spacing is defined as the difference between adjacent eigenvalues, $\delta_k = E_{k + 1} - E_{k}$.
For each pair of system size $L$ and disorder strength $h$, the number of disorder realizations is 10\,000 for $L \le 16$, 5\,000 for $L = 18$, 1\,000 for $L = 20$, and 100 for $L = 22$.
For each realization of disorder, 50 eigenstates with eigenvalues closest to the target energy $\sigma = (E_{\text{min}} + E_{\text{max}}) / 2$ are computed, where $E_{\text{min}}$ and $E_{\text{max}}$ are the smallest and largest eigenvalues, respectively.
The standard deviation is evaluated only over disorder realizations because eigenstates of the same disorder realization can be correlated.

Figure~\ref{fig:ee} shows the system-size dependence of the average bipartite entanglement entropy $\bar{S}$.
The entanglement $\bar{S}$ is confirmed to be proportional to the system volume $L$ for small $h$, which is expected behavior for thermal states.
We can see that the random-matrix-theory (RMT) value~\cite{Vidmar2017,Huang2019,Huang2021}
\begin{equation} \label{eq:ee_random}
   S_{\text{RMT}} (L) = \frac{\ln(2)}{2} L + \frac{1}{2} \ln \left( \frac{1}{2} \right) - \frac{1}{4}
\end{equation}
gives a decent estimate for the bipartite entanglement entropy for small enough $h$ even at the small system sizes $L \le 22$.
On the other hand, we can see that the entanglement $\bar{S}$ scales in a sub-volume manner for large enough $h$.
\begin{figure}[tb]
  \includegraphics[width=\hsize]{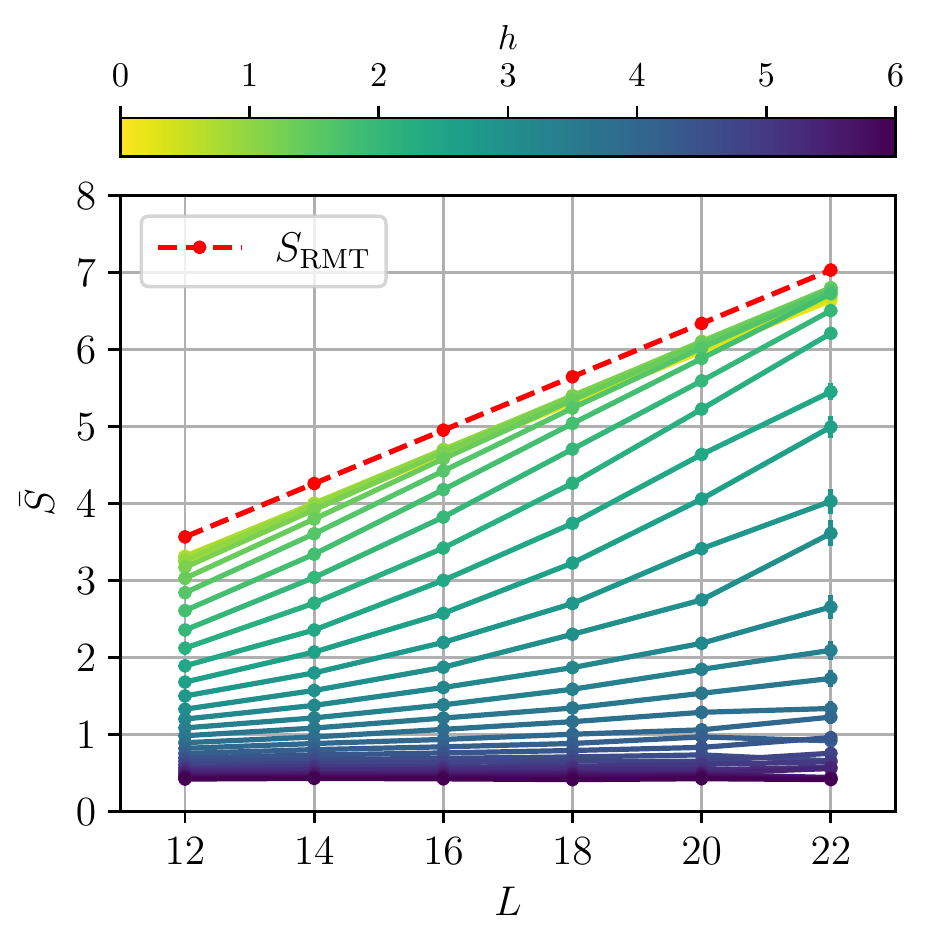}
  \caption{
    \label{fig:ee}
    System-size dependence of eigenstate entanglement entropy.
    The horizontal axis represents the system size.
    The vertical axis represents the average bipartite entanglement entropy.
    The RMT value \eqref{eq:ee_random} is plotted in a dashed line.
    The disorder strengths are $h=0.2, 0.4, 0.6, \dots, 5.8, 6.0$.
    The error bars are one standard deviation.
  }
\end{figure}

Figure~\ref{fig:lsr} shows the disorder-strength dependence of the average level spacing ratio $\bar{r}$.
We can confirm that the level spacing ratio $\bar{r}$ takes the Gaussian orthogonal ensemble (GOE) value $r_{\text{GOE}} = 0.5307(1)$~\cite{Atas2013} for small $h$.
We can also observe that the level spacing ratio $\bar{r}$ decreases down to the Poisson value $r_{\text{Poisson}} = 2 \ln(2) - 1 \approx 0.38629$ as the disorder strength $h$ increases.
\begin{figure}[tb]
  \includegraphics[width=\hsize]{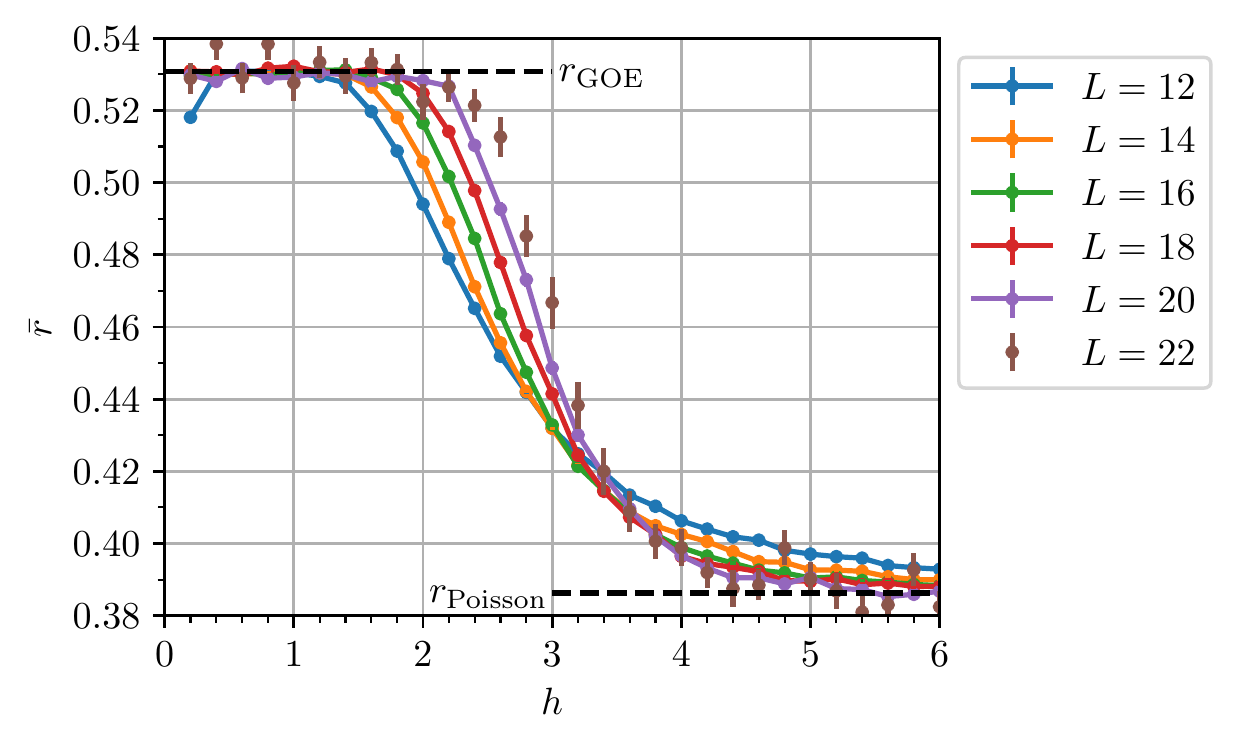}
  \caption{
    \label{fig:lsr}
    Disorder-strength dependence of level spacing ratio.
    The horizontal axis represents the disorder strength.
    The vertical axis represents the average level spacing ratio.
    The system size is shown in the legend.
    The error bars are one standard deviation.
  }
\end{figure}

\subsection{Twist operator}
We compute the eigenstate expectation value of the twist operator by using the same set of eigenvectors as Figs.~\ref{fig:ee} and \ref{fig:lsr}.
First, the squared norm of the expectation value of the twist operator is calculated for each eigenstate.
Second, the computed values are averaged over eigenstates of the same disorder realization.
Lastly, from these intra-spectrum-averaged values, we calculate the average and standard deviation.

Figure~\ref{fig:twist} shows the mean square $\zmeansq$ of the magnitudes of the twist overlap.
As the system size $L$ increases, we can observe that $\zmeansq$ decreases in the weak disorder $h$, but on the other hand, it increases in the strong disorder $h$.
The intersections of the curves of $\zmeansq$ for different system sizes $L$ drift slowly towards larger disorder strength $h$.
Compared with the level spacing ratio (Fig.~\ref{fig:lsr}), the disorder-strength dependence of the eigenstate expectation value of the twist operator is smoother and we can see the system-size dependence more clearly.
\begin{figure}[tb]
  \includegraphics[width=\hsize]{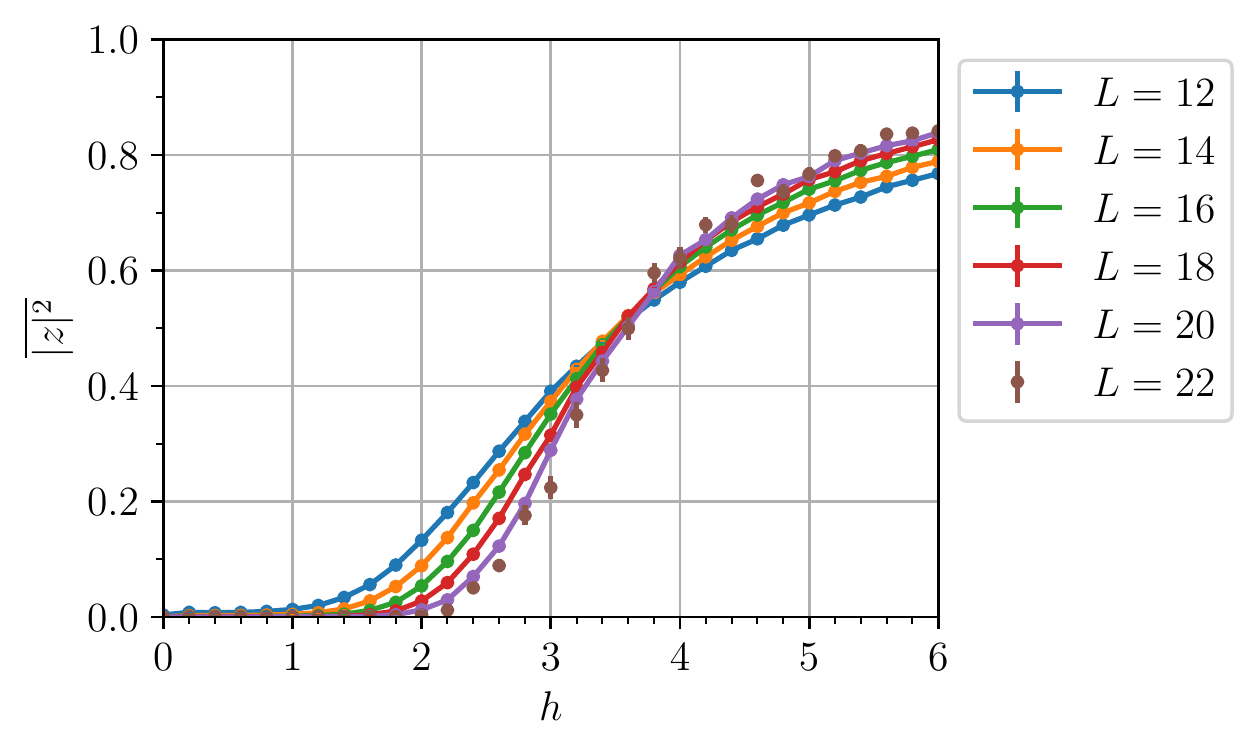}
  \caption{
    \label{fig:twist}
    Disorder-strength dependence of the eigenstate expectation value of the twist operator.
    The horizontal axis represents the disorder strength.
    The error bars are one standard deviation.
  }
\end{figure}

Let us examine the system-size scaling of $\zmeansq$ in the weak and strong disorder regimes.
Figure~\ref{fig:twist_thermal} shows $\zmeansq$ versus the dimension of the Hilbert space $n$ under the weak disorder.
It can be seen that $\zmeansq$ decreases with the power of $n$, \ie, exponentially with the system size $L$, for small $h$.
For $h=2$, the exponent of decay is evaluated as $-0.9$, which is close to the anticipated $n^{-1}$ scaling.
Figure~\ref{fig:twist_localized} shows the system-size dependence of $1 - \zmeansq$ in the strong disorder regime.
It can be seen that $1 - \zmeansq$ decreases with the power of $L$ for large $h$.
For $h=10$, the exponent of the power law is evaluated as $-0.9$, which is consistent with the predicted $L^{-1}$ scaling.
\begin{figure}[tb]
  \includegraphics[width=\hsize]{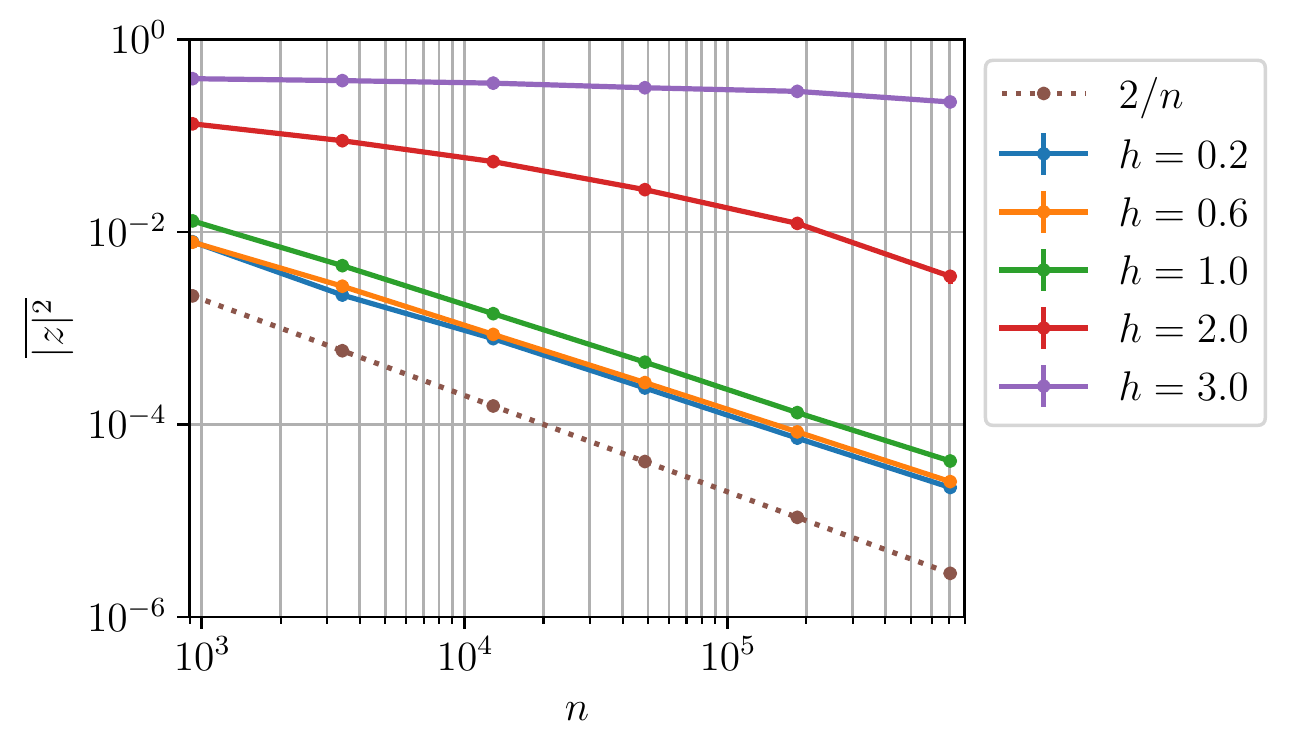}
  \caption{
    \label{fig:twist_thermal}
    System-size dependence of the eigenstate expectation values of the twist operator in the weak disorder regime.
    The horizontal axis represents the dimension $n$ of the Hilbert space.
    The system sizes corresponding to the points in the figure are $L=12, 14, 16, 18, 20, 22$.
    The dotted line indicates $2 / n$ for comparison.
  }
\end{figure}
\begin{figure}[tb]
  \includegraphics[width=\hsize]{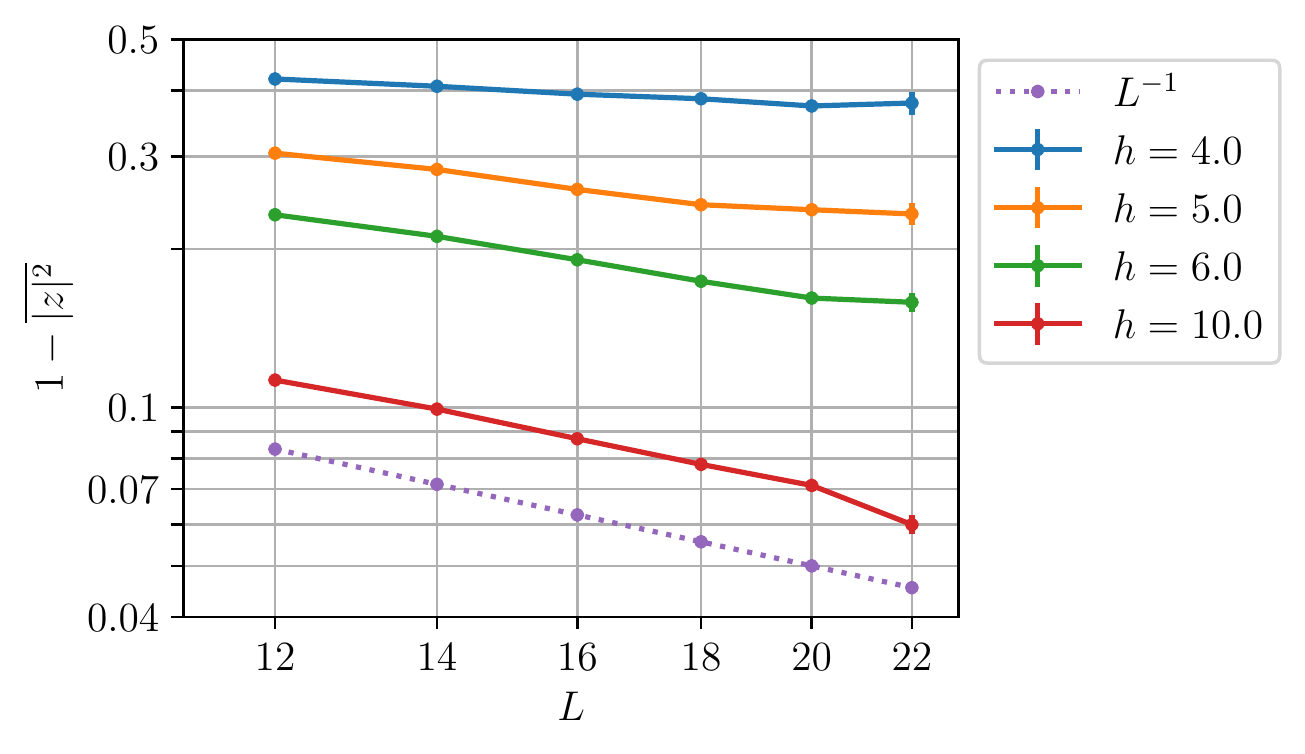}
  \caption{
    \label{fig:twist_localized}
    System-size dependence of the expectation value of the twist operator in the strong disorder regime.
    The horizontal axis represents the system size $L$.
    The dotted line indicates the $L^{-1}$ scaling for comparison.
  }
\end{figure}

Next, we estimate the many-body localization transition point of the model \eqref{eq:heisenberg_hamiltonian}.
In the same approach with a previous research~\cite{Sierant2020}, we interpolate $\zmeansq$ with a polynomial of degree three as a function of the disorder strength $h$ for each system size $L$.
Let $f_L(h)$ be the interpolation of $\zmeansq$ for system size $L$.
We calculate $h_c^1(L)$ as the intersection of interpolations $f_{L - 1}(h)$ and $f_{L + 1}(h)$.
In the same way, we define $h_c^2(L)$ as the intersection of interpolations $f_{L - 2}(h)$ and $f_{L + 2}(h)$.
As shown in Fig.~\ref{fig:twist_hc}, we fit $h_c^1(L)$ and $h_c^2(L)$ as linear functions of $L^{-1}$ and extrapolate them to the thermodynamic limit $L \to \infty$.
We thus conclude that the transition point is $h_c^{\text{twist}} \approx 5$.
This value is close to the value $h_c \approx 5.4$~\cite{Sierant2020} estimated by using the bipartite entanglement entropy and the level spacing ratio.
On the other hand, the present estimate is significantly greater than $h_c \approx 3.7$~\cite{Luitz2015} concluded by the finite-size data collapse.
According to Fig.~\ref{fig:twist_hc}, we can say that the disorder strength $h \approx 3.7$ is likely to be in the thermal region for larger systems.
\begin{figure}[tb]
  \includegraphics[width=\hsize]{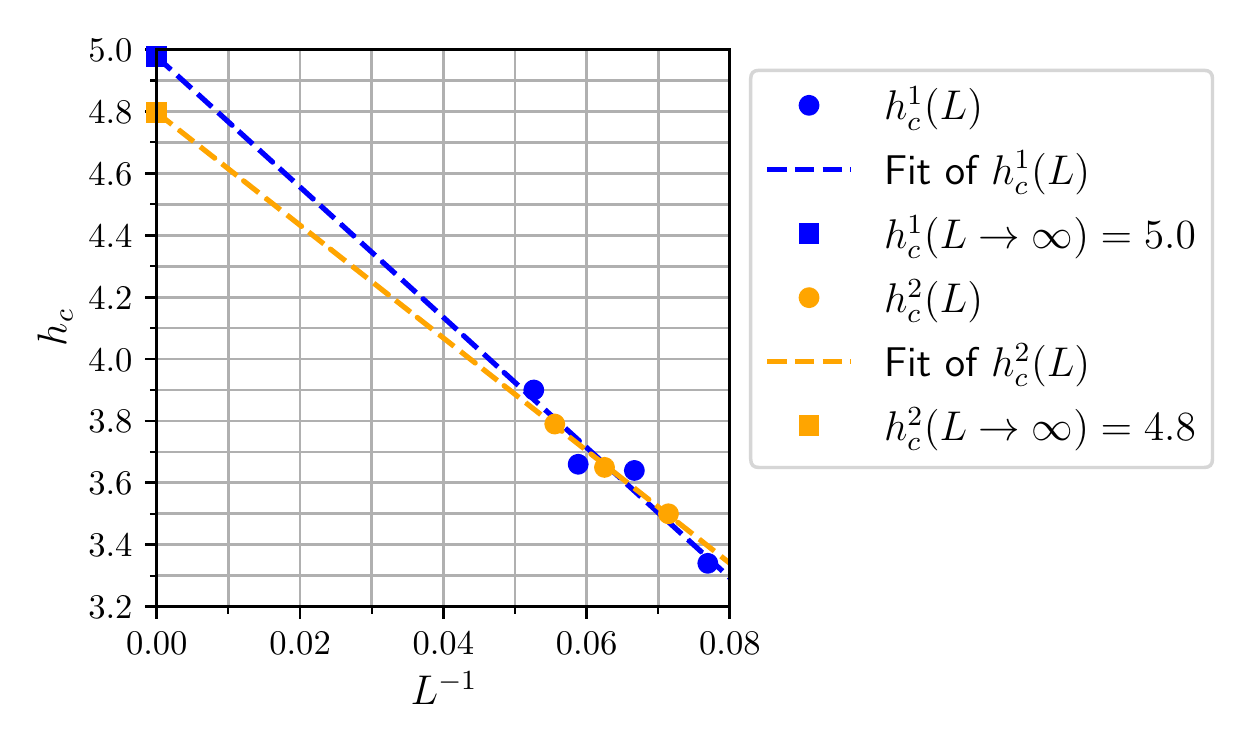}
  \caption{
    \label{fig:twist_hc}
    Estimation of many-body localization transition point of the random-field Heisenberg chain.
    The horizontal axis represents the inverse of the system size $L$.
    The system sizes corresponding to the points in the figure are $L=13, 14, 15, 16, 17, 18, 19$.
    We extrapolate $h_c^1(L)$ and $h_c^2(L)$ to the infinite system size $L$ as linear functions of $L^{-1}$.
  }
\end{figure}

\section{Conclusion}
\label{sec:conclusion}
In this paper, we proposed the nested iterative shift-invert Lanczos method with the minimum residual method (MINRES) for studying many-body localization.
As a result of substituting the iterative linear solver MINRES for direct linear solvers, the memory usage is reduced by several orders of magnitude, begin proportional to the dimension of a Hilbert space.
In addition, we introduced a simple preconditioner for the Hamiltonian of the random-field Heisenberg chain.
The preconditioner helps us to handle a large number of disorder realizations by decreasing the execution time of the algorithm.
We performed a large-scale exact diagonalization of the random-field Heisenberg chain with system size up to 22 and $10^2 \text{--} 10^4$ disorder realizations.
We consider that the algorithm can reach even larger systems with more problem-specific elaborate preconditioners and thus contribute to understanding the fate of the many-body localized phase in the thermodynamic limit.

Furthermore, we proposed the twist operator as a probe of many-body localization transition with a clear interpretation in the real space.
We discussed that the magnitude of the expectation value of the twist operator decreases for thermal eigenstates but increases for localized eigenstates as the system size grows.
We evaluated the twist overlap numerically for the random-field Heisenberg chain using the nested iterative shift-invert diagonalization method.
Also, we estimated the many-body localization transition point of the random-field Heisenberg chain from the system-size dependence of the mean twist overlap and obtained $h_c^{\text{twist}} \approx 5$.
It is left as future work to examine further the twist operator's capability to detect many-body localization transitions in other strongly-correlated quantum systems.

\begin{acknowledgments}
  The computation was performed on the workstations at the Institute for Physics of Intelligence ($i\pi$), the University of Tokyo, and the supercomputing system MASAMUNE-IMR at the Center for Computational Materials Science, the Tohoku University.
\end{acknowledgments}

\appendix

\section{\label{appendix:minres}Minimum residual method}
The minimum residual method (MINRES)~\cite{Paige1975} is an iterative algorithm to solve a system of linear equations $A \bm{x} = \bm{b}$ where $A \in \mathbb{C}^{n \times n}$ is an $n \times n$ Hermitian indefinite matrix and $\bm{b} \in \mathbb{C}^{n}$ a vector.
Let $\bm{x}_0 \in \mathbb{C}^n$ be the initial guess of the solution, $\bm{r}_0 \defeq \bm{b} - A \bm{x}_0$ the initial residual vector, and
\begin{equation}
  \mathcal{K}_{k}(A, \bm{r}_0) \defeq \text{span} (\bm{r}_0, A \bm{r}_0, A^2 \bm{r}_0, \dots, A^{k - 1} \bm{r}_0) \subset \mathbb{C}^{n}
\end{equation}
the $k$-dimensional Krylov subspace.
The solution $\bm{x}_k$ that MINRES yields at the $k$-th step of the algorithm is the least-square solution within the $k$-dimensional Krylov subspace translated by $\bm{x}_0$:
\begin{equation} \label{eq:minres_solution}
  \bm{x}_k \defeq \argmin_{\bm{x} \in \bm{x}_0 + \mathcal{K}_{k} (A, \bm{r}_0)} \norm{\bm{b} - A \bm{x}}.
\end{equation}
This least-square problem can be solved by reducing the coefficient matrix $A$ into a tridiagonal matrix with the Lanczos three-term recurrence relation and eliminating the subdiagonals by a simple Givens rotation~\cite{TemplatesLinear}.
Algorithm~\ref{alg:PMINRES} shows the pseudocode of MINRES that solves a preconditioned system
\begin{equation}
  (C^{-1} A \conjtp{(C^{-1})})(\conjtp{C} \bm{x}) = C^{-1} \bm{v}.
\end{equation}
The preconditioning matrix is defined as $M \coloneqq C \conjtp{C}$.
\begin{figure}[tb]
  \begin{algorithm}[H]
    \caption{Preconditioned minimum residual method} \label{alg:PMINRES}
    \begin{algorithmic}[1]
      \Function{MINRES}{$A,\ \bm{x}_0,\ \bm{b},\ M,\ k_\text{max},\ \text{tolerance}$}
      \State $\bm{z}_0 \gets \bm{0},\ \bm{z}_1 \gets \bm{b} - A \bm{x}_0$
      \State $\bm{q}_1 \gets M^{-1} \bm{z}_1$
      \State $\beta_0 \gets 1,\ \beta_1 \gets \sqrt{\conjtp{\bm{q}_1} \bm{z}_1}$
      \State $\phi_0 \gets \beta_1,\ \hat{\delta}_1 \gets 0$
      \State $c_0 \gets 1,\ s_0 \gets 0$
      \State $\bm{d}_{-1} \gets \bm{0},\ \bm{d}_{0} \gets \bm{0}$
      \State $k \gets 0$
      \While{$|\phi_k| \ge \text{tolerance}$ and $k < k_\text{max}$}
        \State $k \gets k + 1$
        \State $\bm{p}_k \gets A \bm{q}_k$
        \State $\alpha_k \gets \conjtp{\bm{q}_k} \bm{p}_k / \beta_k^2$
        \State $\bm{z}_{k + 1} \gets \frac{1}{\beta_k} \bm{p}_k - \frac{\alpha_k}{\beta_k} \bm{z}_k - \frac{\beta_k}{\beta_{k - 1}} \bm{z}_{k - 1}$
        \State $\bm{q}_{k + 1} \gets M^{-1} \bm{z}_{k + 1}$
        \State $\beta_{k + 1} \gets \sqrt{\conjtp{\bm{q}_{k + 1}} \bm{z}_{k + 1}}$
        \State $\delta_k \gets c_{k - 1} \hat{\delta}_k + s_{k - 1} \alpha_k$
        \State $\hat{\gamma}_k \gets -s_{k - 1} \hat{\delta}_k + c_{k - 1} \alpha_k$
        \State $\epsilon_{k + 1} \gets s_{k - 1} \beta_{k + 1}$
        \State $\hat{\delta}_{k + 1} \gets c_{k - 1} \beta_{k + 1}$
        \State $\gamma_{k} \gets \sqrt{\hat{\gamma}_{k}^2 + \beta_{k + 1}^2}$
        \State $c_k \gets \hat{\gamma}_{k} / \gamma_{k}$
        \State $s_k \gets \beta_{k + 1} / \gamma_{k}$
        \State $\tau_k \gets c_k \phi_{k - 1}$
        \State $\phi_k \gets -s_k \phi_{k - 1}$
        \State $\bm{d}_k \gets (\frac{1}{\beta_k} \bm{q}_k - \delta_k \bm{d}_{k - 1} - \epsilon_k \bm{d}_{k - 2}) / \gamma_k$
        \State $\bm{x}_k \gets \bm{x}_{k - 1} + \tau_k \bm{d}_k$
      \EndWhile
      \State \Return $\bm{x}_k$
      \EndFunction
    \end{algorithmic}
  \end{algorithm}
\end{figure}

\section{\label{appendix:twist_thermal_gaussian}Derivation of Eq.~\eqref{eq:twist_thermal_gaussian}}
The expectation value of the twist operator can be written as follows:
\begin{equation}
  \expval{\twist}{\psi} = \sum_{\alpha = 1}^{n} c_{\alpha}^2 u_{\alpha} \quad (u_{\alpha} \defeq \expval{\twist}{\alpha}),
\end{equation}
whose squared norm becomes
\begin{equation}
  |\expval{\twist}{\psi}|^2
  = \sum_{\alpha = 1}^{n} c_{\alpha}^4 + \sum_{\alpha \neq \beta} u_{\alpha}^{*} u_{\beta} c_{\alpha}^2 c_{\beta}^2,
\end{equation}
where we used $|u_{\alpha}| = 1$.
The average of the squared norm can be calculated as follows:
\begin{align}
  \begin{split}
    \overline{|\expval{\twist}{\psi}|^2}
    &= \sum_{\alpha = 1}^{n} \overline{c_{\alpha}^4} + \sum_{\alpha \neq \beta} u_{\alpha}^{*} u_{\beta} \overline{c_{\alpha}^2 c_{\beta}^2} \\
    &= \sum_{\alpha = 1}^{n} \overline{c_{\alpha}^4} + \sum_{\alpha \neq \beta} u_{\alpha}^{*} u_{\beta} \left( \overline{c_{\alpha}^2} \right) \left( \overline{c_{\beta}^2} \right) \\
    &= n \frac{3}{n^2} + \frac{1}{n^2} \sum_{\alpha \neq \beta} u_{\alpha}^{*} u_{\beta}
    = \frac{2}{n}.
  \end{split}
\end{align}
Here we used the following identity:
\begin{equation}
  n + \sum_{\alpha \neq \beta} u_{\alpha}^{*} u_{\beta} = 0,
\end{equation}
which can be obtained by taking the squared norm of the both sides of Eq.~\eqref{eq:twist_sum} in the following lemma.
\begin{lemma}
  \begin{equation}
    \sum_{\alpha = 1}^{n} u_{\alpha} = 0. \label{eq:twist_sum}
  \end{equation}
\end{lemma}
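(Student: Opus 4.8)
The plan is to use that $\twist$ is diagonal in the $\sz$ basis — so each $u_\alpha$ is merely a phase — and then to recognize $\sum_\alpha u_\alpha$ as an elementary symmetric polynomial evaluated at the full set of $L$-th roots of unity, which vanishes.

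First I would parametrize each $\sz$-basis vector $\ket{\alpha}$ by its occupation numbers $(n_1^\alpha,\dots,n_L^\alpha)\in\{0,1\}^L$, so that $\sz_j\ket{\alpha}=(n_j^\alpha-\tfrac12)\ket{\alpha}$ and the constraint $\sz=0$ reads $\sum_{j=1}^L n_j^\alpha = L/2$. Since $\twist\ket{\alpha}=\exp[\,i\tfrac{2\pi}{L}\sum_j j(n_j^\alpha-\tfrac12)\,]\ket{\alpha}$, a short computation using $\sum_{j=1}^L j = L(L+1)/2$ gives
\begin{equation}
  u_\alpha = e^{-i\pi(L+1)/2}\prod_{j:\,n_j^\alpha=1}\omega^{\,j},\qquad \omega\defeq e^{2\pi i/L}.
\end{equation}
The prefactor does not depend on $\alpha$, and summing over all $\alpha$ in the $\sz=0$ sector amounts to summing $\prod_{j\in S}\omega^{\,j}$ over all subsets $S\subset\{1,\dots,L\}$ with $|S|=L/2$. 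Hence $\sum_{\alpha=1}^n u_\alpha = e^{-i\pi(L+1)/2}\,e_{L/2}\!\left(\omega^1,\omega^2,\dots,\omega^L\right)$, where $e_k$ denotes the $k$-th elementary symmetric polynomial.

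Next I would evaluate this elementary symmetric polynomial. Because $\omega^L=1$, the arguments $\{\omega^1,\dots,\omega^L\}$ are exactly all $L$-th roots of unity, i.e.\ the roots of $t^L-1$, so the generating identity $\prod_{j=1}^L(1+x\,\omega^{\,j})=\prod_{\zeta^L=1}(1+x\zeta)=1-(-1)^L x^L$ holds; with $L$ even this equals $1-x^L$. Comparing the coefficient of $x^{L/2}$ on both sides (note $1\le L/2\le L-1$) yields $e_{L/2}(\omega^1,\dots,\omega^L)=0$, and therefore $\sum_{\alpha=1}^n u_\alpha=0$.

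There is no genuinely hard step here; the only points needing care are the phase bookkeeping and the observation that $\{\omega^{\,j}\}_{j=1}^L$ exhausts the $L$-th roots of unity, which is precisely what collapses the symmetric polynomial. One can avoid the generating function by building a fixed-point-free sign-reversing involution $S\mapsto S+1 \pmod L$ on the size-$L/2$ subsets, using $\omega^{L/2}=-1$ together with a boundary correction, but the root-of-unity identity is the cleanest route and is the one I would write up.
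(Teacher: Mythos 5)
Your proof is correct, and it takes a genuinely different route from the paper. The paper's argument is a symmetry argument: it considers the normalized uniform superposition $\ket{\phi}=n^{-1/2}\sum_\alpha\ket{\alpha}$, notes that $\ket{\phi}$ is invariant under the translation $\hat{T}$, and shows that conjugating $\twist$ by $\hat{T}$ in the $\sz=0$ sector produces $\conjtp{\hat{T}}\twist\hat{T}=-\twist$ (the sign coming from $e^{-2i\pi\sz_L}=-1$ together with $\sum_j\sz_j=0$). Equating the two expressions for $\expval{\twist}{\phi}$ forces it to vanish. Your approach instead computes each $u_\alpha$ explicitly as a product of powers of $\omega=e^{2\pi i/L}$ over the occupied sites, identifies $\sum_\alpha u_\alpha$ with (a phase times) the elementary symmetric polynomial $e_{L/2}(\omega^1,\dots,\omega^L)$, and kills it via the factorization $\prod_{j=1}^{L}(1+x\,\omega^{\,j})=1-x^{L}$ for $L$ even, which has no $x^{L/2}$ term. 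The two arguments encode the same combinatorial fact, and the sign-reversing involution $S\mapsto S+1\ (\mathrm{mod}\ L)$ you mention in your final remark is essentially the paper's translation argument in disguise. The paper's version is shorter and more physical, leaning on translation invariance; yours is more elementary and self-contained, and it in fact proves the stronger statement that $e_k(\omega^1,\dots,\omega^L)=0$ for every $0<k<L$, so the sum of $u_\alpha$ vanishes in every nonmaximal $\sz$ sector, not just $\sz=0$.
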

\begin{proof}
  We consider the following state:
  \begin{equation}
    \ket{\phi} = \frac{1}{\sqrt{n}} \sum_{\alpha = 1}^{n} \ket{\alpha},
  \end{equation}
  for which the expectation value of the twist operator is
  \begin{equation} \label{eq:twist_sum_phi}
    \expval{\twist}{\phi} = \frac{1}{n} \sum_{\alpha = 1}^{n} u_{\alpha}.
  \end{equation}
  Let $\hat{T}$ be the translation operator which satisfies
  \begin{equation}
    \conjtp{\hat{T}} \svec_{j} \hat{T} = \svec_{j - 1},
  \end{equation}
  where we identify $\svec_{0}$ with $\svec_{L}$.
  The state $\ket{\psi}$ is translation invariant, hence we have
  \begin{equation} \label{eq:twist_sum_translation_invariant}
    \expval{\conjtp{\hat{T}} \twist \hat{T}}{\phi} = \expval{\twist}{\phi}.
  \end{equation}
  On the other hand, if we act the translation operator on the twist operator, we obtain
  \begin{align}
    \begin{split}
      \conjtp{\hat{T}} \twist \hat{T}
      &= \exp \left[i \frac{2\pi}{L} \sum_{j = 1}^{L} j \sz_{j - 1} \right] \\
      &= \exp \left[i \frac{2\pi}{L} \sum_{j = 1}^{L} \{ (j - 1) + 1 \} \sz_{j - 1} \right] \\
      &= \exp \left[i \frac{2\pi}{L} \sum_{j = 1}^{L} j \sz_{j} \right] e^{-i \frac{2\pi}{L} L \sz_{L}} e^{i \frac{2\pi}{L} \sum_{j = 1}^{L} \sz_{j}} \\
      &= -\twist,
    \end{split}
  \end{align}
  where we used $e^{-2i\pi \sz_{L}} = -1$ and $\sum_{j = 1}^{L} \sz_{j} = 0$.
  Therefore, we also have
  \begin{equation} \label{eq:twist_sum_twist_translation}
    \expval{\conjtp{\hat{T}} \twist \hat{T}}{\phi} = -\expval{\twist}{\phi}.
  \end{equation}
  Eqs.~\eqref{eq:twist_sum_phi}, \eqref{eq:twist_sum_translation_invariant}, and \eqref{eq:twist_sum_twist_translation} give Eq.~\eqref{eq:twist_sum}.
\end{proof}

\section{\label{appendix:twist_perturbation_bound}Derivation of Eq.~\eqref{eq:twist_perturbation_bound}}
Since the twist operator $\twist$ is diagonal with respect to the unperturbed eigenstates, we can write the expectation value of the twist operator as follows:
\begin{multline} \label{eq:twist_2}
  \expval{\twist}{\psi} = \expval{\twist}{\alpha} \\
  + \sum_{\beta \neq \alpha} \left| c_{\beta}^{(1)} \right|^{2} \expval{\twist}{\beta} + \mathcal{O}(J^3).
\end{multline}
Let us evaluate the leading $\mathcal{O}(J^2)$ perturbation.
The off-diagonal element $\mel{\beta}{\hat{V}}{\alpha}$ is nonzero if and only if there is a site $j$ such that the $j$-th and $(j + 1)$-th spins of $\ket{\alpha}$ are opposite and $\ket{\beta}$ is the same with $\ket{\alpha}$ except the $j$-th and $(j + 1)$-th spins are flipped.
With this flip operation, the expectation value of the twist operator gains a phase factor $e^{\pm i \frac{2\pi}{L}}$.
For example, if the $j$-th and $(j + 1)$-th spins of $\ket{\alpha}$ is up and down, respectively, and $\ket{\beta}$ is $\ket{\alpha}$ with the two spins flipped, then we have
\begin{equation}
  \expval{\twist}{\beta} = \expval{\twist}{\alpha} e^{+i \frac{2\pi}{L}}.
\end{equation}
Let us define an integer $d_{\alpha \beta}\ (\beta \neq \alpha)$ as follows:
\begin{equation}
  \expval{\twist}{\beta} = \expval{\twist}{\alpha} e^{i \frac{2\pi}{L} d_{\alpha \beta}},
\end{equation}
then the integer $d_{\alpha \beta}$ satisfies $|d_{\alpha \beta}| = 1$ if the expansion coefficient $c_{\beta}^{(1)}$ is nonzero.
With this integer $d_{\alpha \beta}$, Eq.~\eqref{eq:twist_2} is rewritten as follows:
\begin{align}
  \begin{split}
  & \expval{\twist}{\psi} \\
  &= \expval{\twist}{\alpha} \left( 1 + \sum_{\beta \neq \alpha} \left| c_{\beta}^{(1)} \right|^{2} e^{i \frac{2\pi}{L} d_{\alpha \beta}} \right) + \mathcal{O}(J^3) \\
  &= \expval{\twist}{\alpha} \\
  &\times \left( 1 + \braket{\psi}^{(2)} + \sum_{\beta \neq \alpha} \left| c_{\beta}^{(1)} \right|^{2} (e^{i \frac{2\pi}{L} d_{\alpha \beta}} - 1) \right) + \mathcal{O}(J^3).
  \end{split}
\end{align}
We can write
\begin{align}
  \begin{split}
    & \left| \expval{\twist}{\psi} - \expval{\twist}{\alpha} \left( 1 + \braket{\psi}^{(2)} \right) \right| \\
    &= \left| \sum_{\beta \neq \alpha} \left| c_{\beta}^{(1)} \right|^{2} (e^{i \frac{2\pi}{L} d_{\alpha \beta}} - 1) \right| + \mathcal{O}(J^3) \\
    &\le \sum_{\beta \neq \alpha} \left| c_{\beta}^{(1)} \right|^{2} \left| e^{i \frac{2\pi}{L} d_{\alpha \beta}} - 1 \right| + \mathcal{O}(J^3) \\
    &= \sum_{\beta \neq \alpha} \left| c_{\beta}^{(1)} \right|^{2} \left| 2 \sin \left( \frac{\pi}{L} d_{\alpha \beta} \right) \right| + \mathcal{O}(J^3) \\
    &= 2 \sin \left( \frac{\pi}{L} \right) \braket{\psi}^{(2)} + \mathcal{O}(J^3),
  \end{split}
\end{align}
from which we obtain Eq.~\eqref{eq:twist_perturbation_bound}.

\bibliography{main}

\begin{thebibliography}{55}%
\makeatletter
\providecommand \@ifxundefined [1]{%
 \@ifx{#1\undefined}
}%
\providecommand \@ifnum [1]{%
 \ifnum #1\expandafter \@firstoftwo
 \else \expandafter \@secondoftwo
 \fi
}%
\providecommand \@ifx [1]{%
 \ifx #1\expandafter \@firstoftwo
 \else \expandafter \@secondoftwo
 \fi
}%
\providecommand \natexlab [1]{#1}%
\providecommand \enquote  [1]{``#1''}%
\providecommand \bibnamefont  [1]{#1}%
\providecommand \bibfnamefont [1]{#1}%
\providecommand \citenamefont [1]{#1}%
\providecommand \href@noop [0]{\@secondoftwo}%
\providecommand \href [0]{\begingroup \@sanitize@url \@href}%
\providecommand \@href[1]{\@@startlink{#1}\@@href}%
\providecommand \@@href[1]{\endgroup#1\@@endlink}%
\providecommand \@sanitize@url [0]{\catcode `\\12\catcode `\$12\catcode
  `\&12\catcode `\#12\catcode `\^12\catcode `\_12\catcode `\%12\relax}%
\providecommand \@@startlink[1]{}%
\providecommand \@@endlink[0]{}%
\providecommand \url  [0]{\begingroup\@sanitize@url \@url }%
\providecommand \@url [1]{\endgroup\@href {#1}{\urlprefix }}%
\providecommand \urlprefix  [0]{URL }%
\providecommand \Eprint [0]{\href }%
\providecommand \doibase [0]{https://doi.org/}%
\providecommand \selectlanguage [0]{\@gobble}%
\providecommand \bibinfo  [0]{\@secondoftwo}%
\providecommand \bibfield  [0]{\@secondoftwo}%
\providecommand \translation [1]{[#1]}%
\providecommand \BibitemOpen [0]{}%
\providecommand \bibitemStop [0]{}%
\providecommand \bibitemNoStop [0]{.\EOS\space}%
\providecommand \EOS [0]{\spacefactor3000\relax}%
\providecommand \BibitemShut  [1]{\csname bibitem#1\endcsname}%
\let\auto@bib@innerbib\@empty
\bibitem [{\citenamefont {D'Alessio}\ \emph {et~al.}(2016)\citenamefont
  {D'Alessio}, \citenamefont {Kafri}, \citenamefont {Polkovnikov},\ and\
  \citenamefont {Rigol}}]{Alessio2016}%
  \BibitemOpen
  \bibfield  {author} {\bibinfo {author} {\bibfnamefont {L.}~\bibnamefont
  {D'Alessio}}, \bibinfo {author} {\bibfnamefont {Y.}~\bibnamefont {Kafri}},
  \bibinfo {author} {\bibfnamefont {A.}~\bibnamefont {Polkovnikov}},\ and\
  \bibinfo {author} {\bibfnamefont {M.}~\bibnamefont {Rigol}},\ }\href
  {https://doi.org/10.1080/00018732.2016.1198134} {\bibfield  {journal}
  {\bibinfo  {journal} {Adv. Phys.}\ }\textbf {\bibinfo {volume} {65}},\
  \bibinfo {pages} {239} (\bibinfo {year} {2016})}\BibitemShut {NoStop}%
\bibitem [{\citenamefont {Deutsch}(1991)}]{Deutsch1991}%
  \BibitemOpen
  \bibfield  {author} {\bibinfo {author} {\bibfnamefont {J.~M.}\ \bibnamefont
  {Deutsch}},\ }\href {https://doi.org/10.1103/PhysRevA.43.2046} {\bibfield
  {journal} {\bibinfo  {journal} {Phys. Rev. A}\ }\textbf {\bibinfo {volume}
  {43}},\ \bibinfo {pages} {2046} (\bibinfo {year} {1991})}\BibitemShut
  {NoStop}%
\bibitem [{\citenamefont {Srednicki}(1994)}]{Srednicki1994}%
  \BibitemOpen
  \bibfield  {author} {\bibinfo {author} {\bibfnamefont {M.}~\bibnamefont
  {Srednicki}},\ }\href {https://doi.org/10.1103/PhysRevE.50.888} {\bibfield
  {journal} {\bibinfo  {journal} {Phys. Rev. E}\ }\textbf {\bibinfo {volume}
  {50}},\ \bibinfo {pages} {888} (\bibinfo {year} {1994})}\BibitemShut
  {NoStop}%
\bibitem [{\citenamefont {Srednicki}(1999)}]{Srednicki1999}%
  \BibitemOpen
  \bibfield  {author} {\bibinfo {author} {\bibfnamefont {M.}~\bibnamefont
  {Srednicki}},\ }\href {https://doi.org/10.1088/0305-4470/32/7/007} {\bibfield
   {journal} {\bibinfo  {journal} {J. Phys. A: Math. Gen.}\ }\textbf {\bibinfo
  {volume} {32}},\ \bibinfo {pages} {1163} (\bibinfo {year}
  {1999})}\BibitemShut {NoStop}%
\bibitem [{\citenamefont {Rigol}\ \emph {et~al.}(2008)\citenamefont {Rigol},
  \citenamefont {Dunjko},\ and\ \citenamefont {Olshanii}}]{Rigol2008}%
  \BibitemOpen
  \bibfield  {author} {\bibinfo {author} {\bibfnamefont {M.}~\bibnamefont
  {Rigol}}, \bibinfo {author} {\bibfnamefont {V.}~\bibnamefont {Dunjko}},\ and\
  \bibinfo {author} {\bibfnamefont {M.}~\bibnamefont {Olshanii}},\ }\href
  {https://doi.org/10.1038/nature06838} {\bibfield  {journal} {\bibinfo
  {journal} {Nature}\ }\textbf {\bibinfo {volume} {452}},\ \bibinfo {pages}
  {854} (\bibinfo {year} {2008})}\BibitemShut {NoStop}%
\bibitem [{\citenamefont {Rigol}(2009)}]{Rigol2009}%
  \BibitemOpen
  \bibfield  {author} {\bibinfo {author} {\bibfnamefont {M.}~\bibnamefont
  {Rigol}},\ }\href {https://doi.org/10.1103/PhysRevLett.103.100403} {\bibfield
   {journal} {\bibinfo  {journal} {Phys. Rev. Lett.}\ }\textbf {\bibinfo
  {volume} {103}},\ \bibinfo {pages} {100403} (\bibinfo {year}
  {2009})}\BibitemShut {NoStop}%
\bibitem [{\citenamefont {Santos}\ and\ \citenamefont
  {Rigol}(2010)}]{Santos2010}%
  \BibitemOpen
  \bibfield  {author} {\bibinfo {author} {\bibfnamefont {L.~F.}\ \bibnamefont
  {Santos}}\ and\ \bibinfo {author} {\bibfnamefont {M.}~\bibnamefont {Rigol}},\
  }\href {https://doi.org/10.1103/PhysRevE.82.031130} {\bibfield  {journal}
  {\bibinfo  {journal} {Phys. Rev. E}\ }\textbf {\bibinfo {volume} {82}},\
  \bibinfo {pages} {031130} (\bibinfo {year} {2010})}\BibitemShut {NoStop}%
\bibitem [{\citenamefont {Biroli}\ \emph {et~al.}(2010)\citenamefont {Biroli},
  \citenamefont {Kollath},\ and\ \citenamefont {L\"auchli}}]{Biroli2010}%
  \BibitemOpen
  \bibfield  {author} {\bibinfo {author} {\bibfnamefont {G.}~\bibnamefont
  {Biroli}}, \bibinfo {author} {\bibfnamefont {C.}~\bibnamefont {Kollath}},\
  and\ \bibinfo {author} {\bibfnamefont {A.~M.}\ \bibnamefont {L\"auchli}},\
  }\href {https://doi.org/10.1103/PhysRevLett.105.250401} {\bibfield  {journal}
  {\bibinfo  {journal} {Phys. Rev. Lett.}\ }\textbf {\bibinfo {volume} {105}},\
  \bibinfo {pages} {250401} (\bibinfo {year} {2010})}\BibitemShut {NoStop}%
\bibitem [{\citenamefont {Steinigeweg}\ \emph {et~al.}(2013)\citenamefont
  {Steinigeweg}, \citenamefont {Herbrych},\ and\ \citenamefont
  {Prelov\ifmmode~\check{s}\else \v{s}\fi{}ek}}]{Steinigeweg2013}%
  \BibitemOpen
  \bibfield  {author} {\bibinfo {author} {\bibfnamefont {R.}~\bibnamefont
  {Steinigeweg}}, \bibinfo {author} {\bibfnamefont {J.}~\bibnamefont
  {Herbrych}},\ and\ \bibinfo {author} {\bibfnamefont {P.}~\bibnamefont
  {Prelov\ifmmode~\check{s}\else \v{s}\fi{}ek}},\ }\href
  {https://doi.org/10.1103/PhysRevE.87.012118} {\bibfield  {journal} {\bibinfo
  {journal} {Phys. Rev. E}\ }\textbf {\bibinfo {volume} {87}},\ \bibinfo
  {pages} {012118} (\bibinfo {year} {2013})}\BibitemShut {NoStop}%
\bibitem [{\citenamefont {Serbyn}\ \emph {et~al.}(2021)\citenamefont {Serbyn},
  \citenamefont {Abanin},\ and\ \citenamefont {Papi{\'c}}}]{Serbyn2021}%
  \BibitemOpen
  \bibfield  {author} {\bibinfo {author} {\bibfnamefont {M.}~\bibnamefont
  {Serbyn}}, \bibinfo {author} {\bibfnamefont {D.~A.}\ \bibnamefont {Abanin}},\
  and\ \bibinfo {author} {\bibfnamefont {Z.}~\bibnamefont {Papi{\'c}}},\ }\href
  {https://doi.org/10.1038/s41567-021-01230-2} {\bibfield  {journal} {\bibinfo
  {journal} {Nat. Phys.}\ }\textbf {\bibinfo {volume} {17}},\ \bibinfo {pages}
  {675} (\bibinfo {year} {2021})}\BibitemShut {NoStop}%
\bibitem [{\citenamefont {Moudgalya}\ \emph {et~al.}()\citenamefont
  {Moudgalya}, \citenamefont {Bernevig},\ and\ \citenamefont
  {Regnault}}]{Moudgalya2021}%
  \BibitemOpen
  \bibfield  {author} {\bibinfo {author} {\bibfnamefont {S.}~\bibnamefont
  {Moudgalya}}, \bibinfo {author} {\bibfnamefont {B.~A.}\ \bibnamefont
  {Bernevig}},\ and\ \bibinfo {author} {\bibfnamefont {N.}~\bibnamefont
  {Regnault}},\ }\href {https://doi.org/10.48550/arXiv.2109.00548} {\bibinfo
  {journal} {arXiv:2109.00548}\ }\BibitemShut {NoStop}%
\bibitem [{\citenamefont {Papi{\'c}}()}]{Papic2021}%
  \BibitemOpen
\bibfield  {journal} {  }\bibfield  {author} {\bibinfo {author} {\bibfnamefont
  {Z.}~\bibnamefont {Papi{\'c}}},\ }\href
  {https://doi.org/10.48550/arXiv.2108.03460} {\bibinfo  {journal}
  {arXiv:2108.03460}\ }\BibitemShut {NoStop}%
\bibitem [{\citenamefont {Nandkishore}\ and\ \citenamefont
  {Huse}(2015)}]{Nandkishore2015}%
  \BibitemOpen
\bibfield  {journal} {  }\bibfield  {author} {\bibinfo {author} {\bibfnamefont
  {R.}~\bibnamefont {Nandkishore}}\ and\ \bibinfo {author} {\bibfnamefont
  {D.~A.}\ \bibnamefont {Huse}},\ }\href
  {https://doi.org/10.1146/annurev-conmatphys-031214-014726} {\bibfield
  {journal} {\bibinfo  {journal} {Annu. Rev. Condens. Matter Phys.}\ }\textbf
  {\bibinfo {volume} {6}},\ \bibinfo {pages} {15} (\bibinfo {year}
  {2015})}\BibitemShut {NoStop}%
\bibitem [{\citenamefont {Abanin}\ and\ \citenamefont
  {Papić}(2017)}]{Abanin2017}%
  \BibitemOpen
  \bibfield  {author} {\bibinfo {author} {\bibfnamefont {D.~A.}\ \bibnamefont
  {Abanin}}\ and\ \bibinfo {author} {\bibfnamefont {Z.}~\bibnamefont
  {Papić}},\ }\href {https://doi.org/https://doi.org/10.1002/andp.201700169}
  {\bibfield  {journal} {\bibinfo  {journal} {Ann. Phys. (Berlin)}\ }\textbf
  {\bibinfo {volume} {529}},\ \bibinfo {pages} {1700169} (\bibinfo {year}
  {2017})}\BibitemShut {NoStop}%
\bibitem [{\citenamefont {Alet}\ and\ \citenamefont
  {Laflorencie}(2018)}]{Alet2018}%
  \BibitemOpen
  \bibfield  {author} {\bibinfo {author} {\bibfnamefont {F.}~\bibnamefont
  {Alet}}\ and\ \bibinfo {author} {\bibfnamefont {N.}~\bibnamefont
  {Laflorencie}},\ }\href
  {https://doi.org/https://doi.org/10.1016/j.crhy.2018.03.003} {\bibfield
  {journal} {\bibinfo  {journal} {C. R. Phys.}\ }\textbf {\bibinfo {volume}
  {19}},\ \bibinfo {pages} {498} (\bibinfo {year} {2018})}\BibitemShut
  {NoStop}%
\bibitem [{\citenamefont {Abanin}\ \emph {et~al.}(2019)\citenamefont {Abanin},
  \citenamefont {Altman}, \citenamefont {Bloch},\ and\ \citenamefont
  {Serbyn}}]{Abanin2019}%
  \BibitemOpen
  \bibfield  {author} {\bibinfo {author} {\bibfnamefont {D.~A.}\ \bibnamefont
  {Abanin}}, \bibinfo {author} {\bibfnamefont {E.}~\bibnamefont {Altman}},
  \bibinfo {author} {\bibfnamefont {I.}~\bibnamefont {Bloch}},\ and\ \bibinfo
  {author} {\bibfnamefont {M.}~\bibnamefont {Serbyn}},\ }\href
  {https://doi.org/10.1103/RevModPhys.91.021001} {\bibfield  {journal}
  {\bibinfo  {journal} {Rev. Mod. Phys.}\ }\textbf {\bibinfo {volume} {91}},\
  \bibinfo {pages} {021001} (\bibinfo {year} {2019})}\BibitemShut {NoStop}%
\bibitem [{\citenamefont {Pal}\ and\ \citenamefont {Huse}(2010)}]{Pal2010}%
  \BibitemOpen
  \bibfield  {author} {\bibinfo {author} {\bibfnamefont {A.}~\bibnamefont
  {Pal}}\ and\ \bibinfo {author} {\bibfnamefont {D.~A.}\ \bibnamefont {Huse}},\
  }\href {https://doi.org/10.1103/PhysRevB.82.174411} {\bibfield  {journal}
  {\bibinfo  {journal} {Phys. Rev. B}\ }\textbf {\bibinfo {volume} {82}},\
  \bibinfo {pages} {174411} (\bibinfo {year} {2010})}\BibitemShut {NoStop}%
\bibitem [{\citenamefont {Luitz}\ \emph {et~al.}(2015)\citenamefont {Luitz},
  \citenamefont {Laflorencie},\ and\ \citenamefont {Alet}}]{Luitz2015}%
  \BibitemOpen
  \bibfield  {author} {\bibinfo {author} {\bibfnamefont {D.~J.}\ \bibnamefont
  {Luitz}}, \bibinfo {author} {\bibfnamefont {N.}~\bibnamefont {Laflorencie}},\
  and\ \bibinfo {author} {\bibfnamefont {F.}~\bibnamefont {Alet}},\ }\href
  {https://doi.org/10.1103/PhysRevB.91.081103} {\bibfield  {journal} {\bibinfo
  {journal} {Phys. Rev. B}\ }\textbf {\bibinfo {volume} {91}},\ \bibinfo
  {pages} {081103} (\bibinfo {year} {2015})}\BibitemShut {NoStop}%
\bibitem [{\citenamefont {Serbyn}\ \emph
  {et~al.}(2013{\natexlab{a}})\citenamefont {Serbyn}, \citenamefont
  {Papi\ifmmode~\acute{c}\else \'{c}\fi{}},\ and\ \citenamefont
  {Abanin}}]{Serbyn2013a}%
  \BibitemOpen
  \bibfield  {author} {\bibinfo {author} {\bibfnamefont {M.}~\bibnamefont
  {Serbyn}}, \bibinfo {author} {\bibfnamefont {Z.}~\bibnamefont
  {Papi\ifmmode~\acute{c}\else \'{c}\fi{}}},\ and\ \bibinfo {author}
  {\bibfnamefont {D.~A.}\ \bibnamefont {Abanin}},\ }\href
  {https://doi.org/10.1103/PhysRevLett.111.127201} {\bibfield  {journal}
  {\bibinfo  {journal} {Phys. Rev. Lett.}\ }\textbf {\bibinfo {volume} {111}},\
  \bibinfo {pages} {127201} (\bibinfo {year} {2013}{\natexlab{a}})}\BibitemShut
  {NoStop}%
\bibitem [{\citenamefont {\ifmmode \check{Z}\else
  \v{Z}\fi{}nidari\ifmmode~\check{c}\else \v{c}\fi{}}\ \emph
  {et~al.}(2008)\citenamefont {\ifmmode \check{Z}\else
  \v{Z}\fi{}nidari\ifmmode~\check{c}\else \v{c}\fi{}}, \citenamefont {Prosen},\
  and\ \citenamefont {Prelov\ifmmode~\check{s}\else
  \v{s}\fi{}ek}}]{Znidaric2008}%
  \BibitemOpen
  \bibfield  {author} {\bibinfo {author} {\bibfnamefont {M.}~\bibnamefont
  {\ifmmode \check{Z}\else \v{Z}\fi{}nidari\ifmmode~\check{c}\else
  \v{c}\fi{}}}, \bibinfo {author} {\bibfnamefont {T.}~\bibnamefont {Prosen}},\
  and\ \bibinfo {author} {\bibfnamefont {P.}~\bibnamefont
  {Prelov\ifmmode~\check{s}\else \v{s}\fi{}ek}},\ }\href
  {https://doi.org/10.1103/PhysRevB.77.064426} {\bibfield  {journal} {\bibinfo
  {journal} {Phys. Rev. B}\ }\textbf {\bibinfo {volume} {77}},\ \bibinfo
  {pages} {064426} (\bibinfo {year} {2008})}\BibitemShut {NoStop}%
\bibitem [{\citenamefont {Bardarson}\ \emph {et~al.}(2012)\citenamefont
  {Bardarson}, \citenamefont {Pollmann},\ and\ \citenamefont
  {Moore}}]{Bardarson2012}%
  \BibitemOpen
  \bibfield  {author} {\bibinfo {author} {\bibfnamefont {J.~H.}\ \bibnamefont
  {Bardarson}}, \bibinfo {author} {\bibfnamefont {F.}~\bibnamefont
  {Pollmann}},\ and\ \bibinfo {author} {\bibfnamefont {J.~E.}\ \bibnamefont
  {Moore}},\ }\href {https://doi.org/10.1103/PhysRevLett.109.017202} {\bibfield
   {journal} {\bibinfo  {journal} {Phys. Rev. Lett.}\ }\textbf {\bibinfo
  {volume} {109}},\ \bibinfo {pages} {017202} (\bibinfo {year}
  {2012})}\BibitemShut {NoStop}%
\bibitem [{\citenamefont {Serbyn}\ \emph
  {et~al.}(2013{\natexlab{b}})\citenamefont {Serbyn}, \citenamefont
  {Papi\ifmmode~\acute{c}\else \'{c}\fi{}},\ and\ \citenamefont
  {Abanin}}]{Serbyn2013b}%
  \BibitemOpen
  \bibfield  {author} {\bibinfo {author} {\bibfnamefont {M.}~\bibnamefont
  {Serbyn}}, \bibinfo {author} {\bibfnamefont {Z.}~\bibnamefont
  {Papi\ifmmode~\acute{c}\else \'{c}\fi{}}},\ and\ \bibinfo {author}
  {\bibfnamefont {D.~A.}\ \bibnamefont {Abanin}},\ }\href
  {https://doi.org/10.1103/PhysRevLett.110.260601} {\bibfield  {journal}
  {\bibinfo  {journal} {Phys. Rev. Lett.}\ }\textbf {\bibinfo {volume} {110}},\
  \bibinfo {pages} {260601} (\bibinfo {year} {2013}{\natexlab{b}})}\BibitemShut
  {NoStop}%
\bibitem [{\citenamefont {Huse}\ \emph {et~al.}(2014)\citenamefont {Huse},
  \citenamefont {Nandkishore},\ and\ \citenamefont {Oganesyan}}]{Huse2014}%
  \BibitemOpen
  \bibfield  {author} {\bibinfo {author} {\bibfnamefont {D.~A.}\ \bibnamefont
  {Huse}}, \bibinfo {author} {\bibfnamefont {R.}~\bibnamefont {Nandkishore}},\
  and\ \bibinfo {author} {\bibfnamefont {V.}~\bibnamefont {Oganesyan}},\ }\href
  {https://doi.org/10.1103/PhysRevB.90.174202} {\bibfield  {journal} {\bibinfo
  {journal} {Phys. Rev. B}\ }\textbf {\bibinfo {volume} {90}},\ \bibinfo
  {pages} {174202} (\bibinfo {year} {2014})}\BibitemShut {NoStop}%
\bibitem [{\citenamefont {Khemani}\ \emph {et~al.}(2016)\citenamefont
  {Khemani}, \citenamefont {Pollmann},\ and\ \citenamefont
  {Sondhi}}]{Khemani2016}%
  \BibitemOpen
  \bibfield  {author} {\bibinfo {author} {\bibfnamefont {V.}~\bibnamefont
  {Khemani}}, \bibinfo {author} {\bibfnamefont {F.}~\bibnamefont {Pollmann}},\
  and\ \bibinfo {author} {\bibfnamefont {S.~L.}\ \bibnamefont {Sondhi}},\
  }\href {https://doi.org/10.1103/PhysRevLett.116.247204} {\bibfield  {journal}
  {\bibinfo  {journal} {Phys. Rev. Lett.}\ }\textbf {\bibinfo {volume} {116}},\
  \bibinfo {pages} {247204} (\bibinfo {year} {2016})}\BibitemShut {NoStop}%
\bibitem [{\citenamefont {Yu}\ \emph {et~al.}(2017)\citenamefont {Yu},
  \citenamefont {Pekker},\ and\ \citenamefont {Clark}}]{Xiongjie2017}%
  \BibitemOpen
  \bibfield  {author} {\bibinfo {author} {\bibfnamefont {X.}~\bibnamefont
  {Yu}}, \bibinfo {author} {\bibfnamefont {D.}~\bibnamefont {Pekker}},\ and\
  \bibinfo {author} {\bibfnamefont {B.~K.}\ \bibnamefont {Clark}},\ }\href
  {https://doi.org/10.1103/PhysRevLett.118.017201} {\bibfield  {journal}
  {\bibinfo  {journal} {Phys. Rev. Lett.}\ }\textbf {\bibinfo {volume} {118}},\
  \bibinfo {pages} {017201} (\bibinfo {year} {2017})}\BibitemShut {NoStop}%
\bibitem [{\citenamefont {Sierant}\ \emph {et~al.}(2020)\citenamefont
  {Sierant}, \citenamefont {Lewenstein},\ and\ \citenamefont
  {Zakrzewski}}]{Sierant2020}%
  \BibitemOpen
  \bibfield  {author} {\bibinfo {author} {\bibfnamefont {P.}~\bibnamefont
  {Sierant}}, \bibinfo {author} {\bibfnamefont {M.}~\bibnamefont
  {Lewenstein}},\ and\ \bibinfo {author} {\bibfnamefont {J.}~\bibnamefont
  {Zakrzewski}},\ }\href {https://doi.org/10.1103/PhysRevLett.125.156601}
  {\bibfield  {journal} {\bibinfo  {journal} {Phys. Rev. Lett.}\ }\textbf
  {\bibinfo {volume} {125}},\ \bibinfo {pages} {156601} (\bibinfo {year}
  {2020})}\BibitemShut {NoStop}%
\bibitem [{\citenamefont {Van~Beeumen}\ \emph {et~al.}(2020)\citenamefont
  {Van~Beeumen}, \citenamefont {Kahanamoku-Meyer}, \citenamefont {Yao},\ and\
  \citenamefont {Yang}}]{Beeumen2020}%
  \BibitemOpen
  \bibfield  {author} {\bibinfo {author} {\bibfnamefont {R.}~\bibnamefont
  {Van~Beeumen}}, \bibinfo {author} {\bibfnamefont {G.~D.}\ \bibnamefont
  {Kahanamoku-Meyer}}, \bibinfo {author} {\bibfnamefont {N.~Y.}\ \bibnamefont
  {Yao}},\ and\ \bibinfo {author} {\bibfnamefont {C.}~\bibnamefont {Yang}},\
  }in\ \href {https://doi.org/10.1145/3368474.3368497} {\emph {\bibinfo
  {booktitle} {Proceedings of the International Conference on High Performance
  Computing in Asia-Pacific Region}}}\ (\bibinfo {year} {2020})\ p.\ \bibinfo
  {pages} {179–187}\BibitemShut {NoStop}%
\bibitem [{\citenamefont {De~Roeck}\ and\ \citenamefont
  {Huveneers}(2017)}]{Roeck2017}%
  \BibitemOpen
  \bibfield  {author} {\bibinfo {author} {\bibfnamefont {W.}~\bibnamefont
  {De~Roeck}}\ and\ \bibinfo {author} {\bibfnamefont {F.}~\bibnamefont
  {Huveneers}},\ }\href {https://doi.org/10.1103/PhysRevB.95.155129} {\bibfield
   {journal} {\bibinfo  {journal} {Phys. Rev. B}\ }\textbf {\bibinfo {volume}
  {95}},\ \bibinfo {pages} {155129} (\bibinfo {year} {2017})}\BibitemShut
  {NoStop}%
\bibitem [{\citenamefont {Luitz}\ \emph {et~al.}(2017)\citenamefont {Luitz},
  \citenamefont {Huveneers},\ and\ \citenamefont {De~Roeck}}]{Luitz2017}%
  \BibitemOpen
  \bibfield  {author} {\bibinfo {author} {\bibfnamefont {D.~J.}\ \bibnamefont
  {Luitz}}, \bibinfo {author} {\bibfnamefont {F.}~\bibnamefont {Huveneers}},\
  and\ \bibinfo {author} {\bibfnamefont {W.}~\bibnamefont {De~Roeck}},\ }\href
  {https://doi.org/10.1103/PhysRevLett.119.150602} {\bibfield  {journal}
  {\bibinfo  {journal} {Phys. Rev. Lett.}\ }\textbf {\bibinfo {volume} {119}},\
  \bibinfo {pages} {150602} (\bibinfo {year} {2017})}\BibitemShut {NoStop}%
\bibitem [{\citenamefont {Vosk}\ \emph {et~al.}(2015)\citenamefont {Vosk},
  \citenamefont {Huse},\ and\ \citenamefont {Altman}}]{Vosk2015}%
  \BibitemOpen
  \bibfield  {author} {\bibinfo {author} {\bibfnamefont {R.}~\bibnamefont
  {Vosk}}, \bibinfo {author} {\bibfnamefont {D.~A.}\ \bibnamefont {Huse}},\
  and\ \bibinfo {author} {\bibfnamefont {E.}~\bibnamefont {Altman}},\ }\href
  {https://doi.org/10.1103/PhysRevX.5.031032} {\bibfield  {journal} {\bibinfo
  {journal} {Phys. Rev. X}\ }\textbf {\bibinfo {volume} {5}},\ \bibinfo {pages}
  {031032} (\bibinfo {year} {2015})}\BibitemShut {NoStop}%
\bibitem [{\citenamefont {Potter}\ \emph {et~al.}(2015)\citenamefont {Potter},
  \citenamefont {Vasseur},\ and\ \citenamefont {Parameswaran}}]{Potter2015}%
  \BibitemOpen
  \bibfield  {author} {\bibinfo {author} {\bibfnamefont {A.~C.}\ \bibnamefont
  {Potter}}, \bibinfo {author} {\bibfnamefont {R.}~\bibnamefont {Vasseur}},\
  and\ \bibinfo {author} {\bibfnamefont {S.~A.}\ \bibnamefont {Parameswaran}},\
  }\href {https://doi.org/10.1103/PhysRevX.5.031033} {\bibfield  {journal}
  {\bibinfo  {journal} {Phys. Rev. X}\ }\textbf {\bibinfo {volume} {5}},\
  \bibinfo {pages} {031033} (\bibinfo {year} {2015})}\BibitemShut {NoStop}%
\bibitem [{\citenamefont {Dumitrescu}\ \emph {et~al.}(2017)\citenamefont
  {Dumitrescu}, \citenamefont {Vasseur},\ and\ \citenamefont
  {Potter}}]{Dumitrescu2017}%
  \BibitemOpen
  \bibfield  {author} {\bibinfo {author} {\bibfnamefont {P.~T.}\ \bibnamefont
  {Dumitrescu}}, \bibinfo {author} {\bibfnamefont {R.}~\bibnamefont
  {Vasseur}},\ and\ \bibinfo {author} {\bibfnamefont {A.~C.}\ \bibnamefont
  {Potter}},\ }\href {https://doi.org/10.1103/PhysRevLett.119.110604}
  {\bibfield  {journal} {\bibinfo  {journal} {Phys. Rev. Lett.}\ }\textbf
  {\bibinfo {volume} {119}},\ \bibinfo {pages} {110604} (\bibinfo {year}
  {2017})}\BibitemShut {NoStop}%
\bibitem [{\citenamefont {Thiery}\ \emph {et~al.}(2018)\citenamefont {Thiery},
  \citenamefont {Huveneers}, \citenamefont {M\"uller},\ and\ \citenamefont
  {De~Roeck}}]{Thiery2018}%
  \BibitemOpen
  \bibfield  {author} {\bibinfo {author} {\bibfnamefont {T.}~\bibnamefont
  {Thiery}}, \bibinfo {author} {\bibfnamefont {F.}~\bibnamefont {Huveneers}},
  \bibinfo {author} {\bibfnamefont {M.}~\bibnamefont {M\"uller}},\ and\
  \bibinfo {author} {\bibfnamefont {W.}~\bibnamefont {De~Roeck}},\ }\href
  {https://doi.org/10.1103/PhysRevLett.121.140601} {\bibfield  {journal}
  {\bibinfo  {journal} {Phys. Rev. Lett.}\ }\textbf {\bibinfo {volume} {121}},\
  \bibinfo {pages} {140601} (\bibinfo {year} {2018})}\BibitemShut {NoStop}%
\bibitem [{\citenamefont {Zhang}\ \emph {et~al.}(2016)\citenamefont {Zhang},
  \citenamefont {Zhao}, \citenamefont {Devakul},\ and\ \citenamefont
  {Huse}}]{Zhang2016}%
  \BibitemOpen
  \bibfield  {author} {\bibinfo {author} {\bibfnamefont {L.}~\bibnamefont
  {Zhang}}, \bibinfo {author} {\bibfnamefont {B.}~\bibnamefont {Zhao}},
  \bibinfo {author} {\bibfnamefont {T.}~\bibnamefont {Devakul}},\ and\ \bibinfo
  {author} {\bibfnamefont {D.~A.}\ \bibnamefont {Huse}},\ }\href
  {https://doi.org/10.1103/PhysRevB.93.224201} {\bibfield  {journal} {\bibinfo
  {journal} {Phys. Rev. B}\ }\textbf {\bibinfo {volume} {93}},\ \bibinfo
  {pages} {224201} (\bibinfo {year} {2016})}\BibitemShut {NoStop}%
\bibitem [{\citenamefont {Goremykina}\ \emph {et~al.}(2019)\citenamefont
  {Goremykina}, \citenamefont {Vasseur},\ and\ \citenamefont
  {Serbyn}}]{Goremykina2019}%
  \BibitemOpen
  \bibfield  {author} {\bibinfo {author} {\bibfnamefont {A.}~\bibnamefont
  {Goremykina}}, \bibinfo {author} {\bibfnamefont {R.}~\bibnamefont
  {Vasseur}},\ and\ \bibinfo {author} {\bibfnamefont {M.}~\bibnamefont
  {Serbyn}},\ }\href {https://doi.org/10.1103/PhysRevLett.122.040601}
  {\bibfield  {journal} {\bibinfo  {journal} {Phys. Rev. Lett.}\ }\textbf
  {\bibinfo {volume} {122}},\ \bibinfo {pages} {040601} (\bibinfo {year}
  {2019})}\BibitemShut {NoStop}%
\bibitem [{\citenamefont {Morningstar}\ and\ \citenamefont
  {Huse}(2019)}]{Morningstar2019}%
  \BibitemOpen
  \bibfield  {author} {\bibinfo {author} {\bibfnamefont {A.}~\bibnamefont
  {Morningstar}}\ and\ \bibinfo {author} {\bibfnamefont {D.~A.}\ \bibnamefont
  {Huse}},\ }\href {https://doi.org/10.1103/PhysRevB.99.224205} {\bibfield
  {journal} {\bibinfo  {journal} {Phys. Rev. B}\ }\textbf {\bibinfo {volume}
  {99}},\ \bibinfo {pages} {224205} (\bibinfo {year} {2019})}\BibitemShut
  {NoStop}%
\bibitem [{\citenamefont {Morningstar}\ \emph {et~al.}(2020)\citenamefont
  {Morningstar}, \citenamefont {Huse},\ and\ \citenamefont
  {Imbrie}}]{Morningstar2020}%
  \BibitemOpen
  \bibfield  {author} {\bibinfo {author} {\bibfnamefont {A.}~\bibnamefont
  {Morningstar}}, \bibinfo {author} {\bibfnamefont {D.~A.}\ \bibnamefont
  {Huse}},\ and\ \bibinfo {author} {\bibfnamefont {J.~Z.}\ \bibnamefont
  {Imbrie}},\ }\href {https://doi.org/10.1103/PhysRevB.102.125134} {\bibfield
  {journal} {\bibinfo  {journal} {Phys. Rev. B}\ }\textbf {\bibinfo {volume}
  {102}},\ \bibinfo {pages} {125134} (\bibinfo {year} {2020})}\BibitemShut
  {NoStop}%
\bibitem [{\citenamefont {Dumitrescu}\ \emph {et~al.}(2019)\citenamefont
  {Dumitrescu}, \citenamefont {Goremykina}, \citenamefont {Parameswaran},
  \citenamefont {Serbyn},\ and\ \citenamefont {Vasseur}}]{Dumitrescu2019}%
  \BibitemOpen
  \bibfield  {author} {\bibinfo {author} {\bibfnamefont {P.~T.}\ \bibnamefont
  {Dumitrescu}}, \bibinfo {author} {\bibfnamefont {A.}~\bibnamefont
  {Goremykina}}, \bibinfo {author} {\bibfnamefont {S.~A.}\ \bibnamefont
  {Parameswaran}}, \bibinfo {author} {\bibfnamefont {M.}~\bibnamefont
  {Serbyn}},\ and\ \bibinfo {author} {\bibfnamefont {R.}~\bibnamefont
  {Vasseur}},\ }\href {https://doi.org/10.1103/PhysRevB.99.094205} {\bibfield
  {journal} {\bibinfo  {journal} {Phys. Rev. B}\ }\textbf {\bibinfo {volume}
  {99}},\ \bibinfo {pages} {094205} (\bibinfo {year} {2019})}\BibitemShut
  {NoStop}%
\bibitem [{\citenamefont {Ericsson}\ and\ \citenamefont
  {Ruhe}(1980)}]{Ericsson1980}%
  \BibitemOpen
  \bibfield  {author} {\bibinfo {author} {\bibfnamefont {T.}~\bibnamefont
  {Ericsson}}\ and\ \bibinfo {author} {\bibfnamefont {A.}~\bibnamefont
  {Ruhe}},\ }\href {https://doi.org/10.2307/2006390} {\bibfield  {journal}
  {\bibinfo  {journal} {Math. Comp.}\ }\textbf {\bibinfo {volume} {35}},\
  \bibinfo {pages} {1251} (\bibinfo {year} {1980})}\BibitemShut {NoStop}%
\bibitem [{\citenamefont {Bai}\ \emph {et~al.}(2000)\citenamefont {Bai},
  \citenamefont {Demmel}, \citenamefont {Dongarra}, \citenamefont {Ruhe},\ and\
  \citenamefont {van~der Vorst}}]{TemplatesEigen}%
  \BibitemOpen
  \bibinfo {editor} {\bibfnamefont {Z.}~\bibnamefont {Bai}}, \bibinfo {editor}
  {\bibfnamefont {J.}~\bibnamefont {Demmel}}, \bibinfo {editor} {\bibfnamefont
  {J.}~\bibnamefont {Dongarra}}, \bibinfo {editor} {\bibfnamefont
  {A.}~\bibnamefont {Ruhe}},\ and\ \bibinfo {editor} {\bibfnamefont
  {H.}~\bibnamefont {van~der Vorst}},\ eds.,\ \href
  {https://doi.org/10.1137/1.9780898719581} {\emph {\bibinfo {title}
  {{Templates for the Solution of Algebraic Eigenvalue Problems: A Practical
  Guide}}}}\ (\bibinfo  {publisher} {SIAM},\ \bibinfo {address}
  {Philadelphia},\ \bibinfo {year} {2000})\BibitemShut {NoStop}%
\bibitem [{\citenamefont {Schenk}\ \emph {et~al.}(2008)\citenamefont {Schenk},
  \citenamefont {Bollhöfer},\ and\ \citenamefont {Römer}}]{Schenk2008}%
  \BibitemOpen
  \bibfield  {author} {\bibinfo {author} {\bibfnamefont {O.}~\bibnamefont
  {Schenk}}, \bibinfo {author} {\bibfnamefont {M.}~\bibnamefont {Bollhöfer}},\
  and\ \bibinfo {author} {\bibfnamefont {R.~A.}\ \bibnamefont {Römer}},\
  }\href {https://doi.org/10.1137/070707002} {\bibfield  {journal} {\bibinfo
  {journal} {SIAM Review}\ }\textbf {\bibinfo {volume} {50}},\ \bibinfo {pages}
  {91} (\bibinfo {year} {2008})}\BibitemShut {NoStop}%
\bibitem [{\citenamefont {Paige}\ and\ \citenamefont
  {Saunders}(1975)}]{Paige1975}%
  \BibitemOpen
  \bibfield  {author} {\bibinfo {author} {\bibfnamefont {C.~C.}\ \bibnamefont
  {Paige}}\ and\ \bibinfo {author} {\bibfnamefont {M.~A.}\ \bibnamefont
  {Saunders}},\ }\href {https://doi.org/10.1137/0712047} {\bibfield  {journal}
  {\bibinfo  {journal} {SIAM J. Numer. Anal.}\ }\textbf {\bibinfo {volume}
  {12}},\ \bibinfo {pages} {617} (\bibinfo {year} {1975})}\BibitemShut
  {NoStop}%
\bibitem [{\citenamefont {Van~Loan}\ and\ \citenamefont
  {Golub}(2013)}]{MatrixComputations}%
  \BibitemOpen
  \bibfield  {author} {\bibinfo {author} {\bibfnamefont {C.~F.}\ \bibnamefont
  {Van~Loan}}\ and\ \bibinfo {author} {\bibfnamefont {G.~H.}\ \bibnamefont
  {Golub}},\ }\href@noop {} {\emph {\bibinfo {title} {{Matrix
  Computations}}}},\ \bibinfo {edition} {4th}\ ed.\ (\bibinfo  {publisher}
  {Johns Hopkins University Press},\ \bibinfo {address} {Baltimore},\ \bibinfo
  {year} {2013})\BibitemShut {NoStop}%
\bibitem [{\citenamefont {Pietracaprina}\ \emph {et~al.}(2018)\citenamefont
  {Pietracaprina}, \citenamefont {Macé}, \citenamefont {Luitz},\ and\
  \citenamefont {Alet}}]{Pietracaprina2018}%
  \BibitemOpen
  \bibfield  {author} {\bibinfo {author} {\bibfnamefont {F.}~\bibnamefont
  {Pietracaprina}}, \bibinfo {author} {\bibfnamefont {N.}~\bibnamefont
  {Macé}}, \bibinfo {author} {\bibfnamefont {D.~J.}\ \bibnamefont {Luitz}},\
  and\ \bibinfo {author} {\bibfnamefont {F.}~\bibnamefont {Alet}},\ }\href
  {https://doi.org/10.21468/SciPostPhys.5.5.045} {\bibfield  {journal}
  {\bibinfo  {journal} {SciPost Phys.}\ }\textbf {\bibinfo {volume} {5}},\
  \bibinfo {pages} {45} (\bibinfo {year} {2018})}\BibitemShut {NoStop}%
\bibitem [{\citenamefont {Oganesyan}\ and\ \citenamefont
  {Huse}(2007)}]{Oganesyan2007}%
  \BibitemOpen
  \bibfield  {author} {\bibinfo {author} {\bibfnamefont {V.}~\bibnamefont
  {Oganesyan}}\ and\ \bibinfo {author} {\bibfnamefont {D.~A.}\ \bibnamefont
  {Huse}},\ }\href {https://doi.org/10.1103/PhysRevB.75.155111} {\bibfield
  {journal} {\bibinfo  {journal} {Phys. Rev. B}\ }\textbf {\bibinfo {volume}
  {75}},\ \bibinfo {pages} {155111} (\bibinfo {year} {2007})}\BibitemShut
  {NoStop}%
\bibitem [{\citenamefont {Lieb}\ \emph {et~al.}(1961)\citenamefont {Lieb},
  \citenamefont {Schultz},\ and\ \citenamefont {Mattis}}]{Lieb1961}%
  \BibitemOpen
  \bibfield  {author} {\bibinfo {author} {\bibfnamefont {E.}~\bibnamefont
  {Lieb}}, \bibinfo {author} {\bibfnamefont {T.}~\bibnamefont {Schultz}},\ and\
  \bibinfo {author} {\bibfnamefont {D.}~\bibnamefont {Mattis}},\ }\href
  {https://doi.org/https://doi.org/10.1016/0003-4916(61)90115-4} {\bibfield
  {journal} {\bibinfo  {journal} {Ann. Phys.}\ }\textbf {\bibinfo {volume}
  {16}},\ \bibinfo {pages} {407} (\bibinfo {year} {1961})}\BibitemShut
  {NoStop}%
\bibitem [{\citenamefont {Nakamura}\ and\ \citenamefont
  {Todo}(2002)}]{Nakamura2002}%
  \BibitemOpen
  \bibfield  {author} {\bibinfo {author} {\bibfnamefont {M.}~\bibnamefont
  {Nakamura}}\ and\ \bibinfo {author} {\bibfnamefont {S.}~\bibnamefont
  {Todo}},\ }\href {https://doi.org/10.1103/PhysRevLett.89.077204} {\bibfield
  {journal} {\bibinfo  {journal} {Phys. Rev. Lett.}\ }\textbf {\bibinfo
  {volume} {89}},\ \bibinfo {pages} {077204} (\bibinfo {year}
  {2002})}\BibitemShut {NoStop}%
\bibitem [{\citenamefont {Kawamura}\ \emph {et~al.}(2017)\citenamefont
  {Kawamura}, \citenamefont {Yoshimi}, \citenamefont {Misawa}, \citenamefont
  {Yamaji}, \citenamefont {Todo},\ and\ \citenamefont
  {Kawashima}}]{Kawamura2017}%
  \BibitemOpen
  \bibfield  {author} {\bibinfo {author} {\bibfnamefont {M.}~\bibnamefont
  {Kawamura}}, \bibinfo {author} {\bibfnamefont {K.}~\bibnamefont {Yoshimi}},
  \bibinfo {author} {\bibfnamefont {T.}~\bibnamefont {Misawa}}, \bibinfo
  {author} {\bibfnamefont {Y.}~\bibnamefont {Yamaji}}, \bibinfo {author}
  {\bibfnamefont {S.}~\bibnamefont {Todo}},\ and\ \bibinfo {author}
  {\bibfnamefont {N.}~\bibnamefont {Kawashima}},\ }\href
  {https://doi.org/https://doi.org/10.1016/j.cpc.2017.04.006} {\bibfield
  {journal} {\bibinfo  {journal} {Comput. Phys. Commun.}\ }\textbf {\bibinfo
  {volume} {217}},\ \bibinfo {pages} {180} (\bibinfo {year}
  {2017})}\BibitemShut {NoStop}%
\bibitem [{\citenamefont {Affleck}\ and\ \citenamefont
  {Lieb}(1986)}]{Affleck1986}%
  \BibitemOpen
  \bibfield  {author} {\bibinfo {author} {\bibfnamefont {I.}~\bibnamefont
  {Affleck}}\ and\ \bibinfo {author} {\bibfnamefont {E.~H.}\ \bibnamefont
  {Lieb}},\ }\href {https://doi.org/https://doi.org/10.1007/BF00400304}
  {\bibfield  {journal} {\bibinfo  {journal} {Lett. Math. Phys.}\ }\textbf
  {\bibinfo {volume} {12}},\ \bibinfo {pages} {57} (\bibinfo {year}
  {1986})}\BibitemShut {NoStop}%
\bibitem [{CUD(2021)}]{CUDAProgrammingGuide}%
  \BibitemOpen
  \href@noop {} {\bibinfo {title} {{Programming Guide :: CUDA Toolkit
  Documentation}}},\ \bibinfo {howpublished}
  {\url{https://docs.nvidia.com/cuda/cuda-c-programming-guide/index.html}}
  (\bibinfo {year} {2021}),\ \bibinfo {note} {(Accessed on
  12/05/2021)}\BibitemShut {NoStop}%
\bibitem [{\citenamefont {Vidmar}\ and\ \citenamefont
  {Rigol}(2017)}]{Vidmar2017}%
  \BibitemOpen
  \bibfield  {author} {\bibinfo {author} {\bibfnamefont {L.}~\bibnamefont
  {Vidmar}}\ and\ \bibinfo {author} {\bibfnamefont {M.}~\bibnamefont {Rigol}},\
  }\href {https://doi.org/10.1103/PhysRevLett.119.220603} {\bibfield  {journal}
  {\bibinfo  {journal} {Phys. Rev. Lett.}\ }\textbf {\bibinfo {volume} {119}},\
  \bibinfo {pages} {220603} (\bibinfo {year} {2017})}\BibitemShut {NoStop}%
\bibitem [{\citenamefont {Huang}(2019)}]{Huang2019}%
  \BibitemOpen
  \bibfield  {author} {\bibinfo {author} {\bibfnamefont {Y.}~\bibnamefont
  {Huang}},\ }\href
  {https://doi.org/https://doi.org/10.1016/j.nuclphysb.2018.09.013} {\bibfield
  {journal} {\bibinfo  {journal} {Nucl. Phys. B}\ }\textbf {\bibinfo {volume}
  {938}},\ \bibinfo {pages} {594} (\bibinfo {year} {2019})}\BibitemShut
  {NoStop}%
\bibitem [{\citenamefont {Huang}(2021)}]{Huang2021}%
  \BibitemOpen
  \bibfield  {author} {\bibinfo {author} {\bibfnamefont {Y.}~\bibnamefont
  {Huang}},\ }\href
  {https://doi.org/https://doi.org/10.1016/j.nuclphysb.2021.115373} {\bibfield
  {journal} {\bibinfo  {journal} {Nucl. Phys. B}\ }\textbf {\bibinfo {volume}
  {966}},\ \bibinfo {pages} {115373} (\bibinfo {year} {2021})}\BibitemShut
  {NoStop}%
\bibitem [{\citenamefont {Atas}\ \emph {et~al.}(2013)\citenamefont {Atas},
  \citenamefont {Bogomolny}, \citenamefont {Giraud},\ and\ \citenamefont
  {Roux}}]{Atas2013}%
  \BibitemOpen
  \bibfield  {author} {\bibinfo {author} {\bibfnamefont {Y.~Y.}\ \bibnamefont
  {Atas}}, \bibinfo {author} {\bibfnamefont {E.}~\bibnamefont {Bogomolny}},
  \bibinfo {author} {\bibfnamefont {O.}~\bibnamefont {Giraud}},\ and\ \bibinfo
  {author} {\bibfnamefont {G.}~\bibnamefont {Roux}},\ }\href
  {https://doi.org/10.1103/PhysRevLett.110.084101} {\bibfield  {journal}
  {\bibinfo  {journal} {Phys. Rev. Lett.}\ }\textbf {\bibinfo {volume} {110}},\
  \bibinfo {pages} {084101} (\bibinfo {year} {2013})}\BibitemShut {NoStop}%
\bibitem [{\citenamefont {Barrett}\ \emph {et~al.}(1994)\citenamefont
  {Barrett}, \citenamefont {Berry}, \citenamefont {Chan}, \citenamefont
  {Demmel}, \citenamefont {Donato}, \citenamefont {Dongarra}, \citenamefont
  {Eijkhout}, \citenamefont {Pozo}, \citenamefont {Romine},\ and\ \citenamefont
  {der Vorst}}]{TemplatesLinear}%
  \BibitemOpen
  \bibfield  {author} {\bibinfo {author} {\bibfnamefont {R.}~\bibnamefont
  {Barrett}}, \bibinfo {author} {\bibfnamefont {M.}~\bibnamefont {Berry}},
  \bibinfo {author} {\bibfnamefont {T.~F.}\ \bibnamefont {Chan}}, \bibinfo
  {author} {\bibfnamefont {J.}~\bibnamefont {Demmel}}, \bibinfo {author}
  {\bibfnamefont {J.}~\bibnamefont {Donato}}, \bibinfo {author} {\bibfnamefont
  {J.}~\bibnamefont {Dongarra}}, \bibinfo {author} {\bibfnamefont
  {V.}~\bibnamefont {Eijkhout}}, \bibinfo {author} {\bibfnamefont
  {R.}~\bibnamefont {Pozo}}, \bibinfo {author} {\bibfnamefont {C.}~\bibnamefont
  {Romine}},\ and\ \bibinfo {author} {\bibfnamefont {H.~V.}\ \bibnamefont {der
  Vorst}},\ }\href {https://epubs.siam.org/doi/book/10.1137/1.9781611971538}
  {\emph {\bibinfo {title} {{Templates for the Solution of Linear Systems:
  Building Blocks for Iterative Methods, 2nd Edition}}}}\ (\bibinfo
  {publisher} {SIAM},\ \bibinfo {address} {Philadelphia, PA},\ \bibinfo {year}
  {1994})\BibitemShut {NoStop}%
\end{thebibliography}%

\end{document}